\def\BState{\State\hskip-\ALG@thistlm}
\Crefname{equation}{}{}
\newcommand{\bx}{\boldsymbol{x}}
\newcommand{\by}{\boldsymbol{y}}
\newcommand{\bdm}{\boldsymbol{m}}
\newcommand{\bds}{\boldsymbol{s}}
\newcommand{\bdlambda}{\boldsymbol{\lambda}}
\newcommand{\bdf}{\boldsymbol{f}}
\newcommand{\bdu}{\boldsymbol{u}}
\DeclareMathOperator*{\argmin}{arg\,min}
\newcommand{\tra}{\boldsymbol{\mathsf{T}}}
\newcommand{\Cdot}{\,\cdot\,}
\newcommand{\bdOneMask}{\boldsymbol{1}_{mask}}
\newcommand{\bdOne}{\boldsymbol{1}}
\newcommand{\bdvsig}{\boldsymbol{\varsigma}}
\newcommand{\proj}{\mathtt{Proj}}
\newcommand{\bz}{\boldsymbol{z}}
\theoremstyle{plain}
 \newtheorem{theorem}{Theorem}[section]
 \newtheorem{proposition}{Proposition}[section]
 \newtheorem{lemma}{Lemma}[section]
\theoremstyle{definition}
 \newtheorem{definition}{Definition}[section]
\theoremstyle{remark}
  \newtheorem{remark}{Remark}[section]
\newcommand{\bmu}{\boldsymbol{\mu}}
\newcommand{\Yspace}{\mathcal{Y}}
\newcommand{\Xspace}{\mathcal{X}}
\newcommand{\Rmnum}[1]{\expandafter\@slowromancap\romannumeral #1@}
\begin{document}

\title{An efficient approach with theoretical guarantees to simultaneously reconstruct activity and attenuation sinogram for TOF-PET}
\author{Liyang Hu\thanks{State Key Laboratory of Mathematical Sciences, Academy of Mathematics and Systems Science, Chinese Academy of Sciences, Beijing 100190, China; University of Chinese Academy of Sciences, Beijing 100190, China.}
		\and
       Chong Chen\thanks{State Key Laboratory of Mathematical Sciences, Academy of Mathematics and Systems Science, Chinese Academy of Sciences, Beijing 100190, China.}
	}

\maketitle

\begin{abstract}
In positron emission tomography (PET), it is indispensable to perform attenuation correction in order to obtain the quantitatively accurate activity map  (tracer distribution) in the body. Generally, this is carried out based on the estimated attenuation map obtained from computed tomography or magnetic resonance imaging. However, except for errors in the attenuation correction factors obtained, the additional scan not only brings in new radiation doses and/or increases the scanning time but also leads to severe misalignment induced by various motions during and between the two sequential scans. To address these issues, based on maximum likelihood estimation, we propose a new mathematical model for simultaneously reconstructing the activity and attenuation sinogram from the time-of-flight (TOF)-PET emission data only. Particularly, we make full use of the exclusively exponential form for the attenuation correction factors, and consider the constraint of a total amount of the activity in some mask region in the proposed model. Furthermore, we prove its well-posedness, including the existence, uniqueness and stability of the solution. We propose an alternating update algorithm to solve the model, and also analyze its convergence. Finally, numerical experiments with various TOF-PET emission data demonstrate that the proposed method is of numerical convergence and robust to noise, and outperforms some state-of-the-art methods in terms of accuracy and efficiency, and has the capability of autonomous attenuation correction. 

\keywords{TOF-PET, autonomous attenuation correction, simultaneous reconstruction of activity and attenuation sinogram, maximum likelihood estimation, alternating update algorithm, theoretical guarantees}
\end{abstract}

\section{Introduction}\label{sec:Introduction}

Positron emission tomography (PET) is an important noninvasive functional imaging modality in nuclear medicine by injecting a dose of compounds labelled with a  radioactive tracer isotope (e.g., fluorine-18) that decays by positron emission into the patient, which has been widely used for  neurological, oncological and cardiovascular applications \cite{ollinger1997}. Once the isotope suffers from radioactive decay, the emitted positron travels a short distance of about 1-2 mm, following an annihilation with an electron, and simultaneously sending out a pair of high-energy annihilation photons propagating along approximately opposite directions, which subsequently are detected in a way of coincidence counting by the PET scanner surrounding the imaged object \cite{Vaquero2015Positron}. The acquisition of all these pairs consists of the PET measured data that is used to quantitatively and accurately reconstruct the distribution of the tracer activity. However, due to the absorption of photons by human tissues, the measured emission data is attenuated, in which case the attenuation correction is mandatory in order to avoid severely deteriorating the quality of reconstructed images. In general, an additional and separate transmission scan is conducted to provide the required attenuation correction factors \cite{burger2002,zaidi2006}. Recently, with the development of the nuclear medicine, the integration of PET imaging with other structural imaging modalities such as X-ray computed tomography (CT) \cite{kinahan2003,xia2011} or magnetic resonance imaging (MRI) \cite{bezrukov2013,burgos2014} has gained more and more interests and provides a novel strategy for evaluating the desirable attenuation map. But, in this way, there are still some limitations and challenges. For instance, in current PET-CT imaging systems, the attenuation correction is normally executed by making use of the well-matched CT images that are regulated from the relatively low energy range of X-ray to the photon energy of 511 keV, in which situation the accuracy of registration of PET and CT is typically affected by motions \cite{osman2003}. It is more complicated to synthesize the CT-equivalent attenuation maps from PET-MR imaging systems by means of the multi-atlas information. The process traditionally involves the use of locally alignment between the MRI-based patient’s morphology to a known database of MRI/CT pairs by a local image
similarity measure. But as known, there is a dilemma that the difference between the gray values of bone and that of air in MRI is extremely small instead the difference in PET is substantially significant.

To overcome these drawbacks mentioned previously, some authors have suggested to perform attenuation correction only using the emission data. An early attempt to settle this problem can date back to the work of Censor et al. \cite{censor1979}, simply treating it as an underdetermined nonlinear system and solving it by an iterative approach. Another attempt was based on the Helgason--Ludwig data consistency conditions \cite{helgason1965,ludwig1966}. Natterer et al. \cite{natterer1992} provided some theoretical results based on the strong assumptions on attenuation map to answer in which case the attenuation map can be determined from emission scans. The estimation of attenuation map was converted to compute an affine transformation of a known attenuation map \cite{natterer1993, welch1997}. Indeed, these results are of limited practical success. A different idea to determine attenuation map from emission data was proposed by Bronnikov \cite{bronnikov1995}, assuming that the activity map is approximately uniform and attenuation coefficients are small. Subsequently, the author again developed a two-stage approach based on the discrete consistency in \cite{bronnikov2000}, firstly estimating the attenuation image from an approximately discrete linear model and following the attenuation compensation.

Alternatively, different from the described methods above, statistical based models take sufficiently the Poisson nature of the collected data into account, tending to retrieve better reconstruction performance. Typically, the method of maximum likelihood (ML) is often used to solve such kind of models. More concretely, the authors in \cite{krol1995} devised an EM algorithm to maximize the likelihood for simultaneous reconstruction of activity and attenuation in single photon emission computed tomography. Nuyts et al. extended this approach to PET imaging exploiting a maximum a posterior (MAP) framework by incorporating some additional prior information about the attenuation coefficients \cite{nuyts1999}. Another related work can be found in \cite{clinthorne1991}, where an algorithm was derived for joint ML estimation of attenuation and activity images by the emission and transmission measurements. In addition, in \cite{depierro2006}, De Pierro et al. developed to alternatively maximize a surrogate function of the log-likelihood function for simultaneous activity and attenuation recovery from emission data only. Nevertheless, nearly all the above studies reported the so-called cross-talk phenomenon, where the local errors in the attenuation map are closely intertwined with the local errors in the activity image, especially occurring in the edge of activity image. For the ML approach, the existence of local solutions accounts for the disagreeable phenomenon.

Fortunately, with the advent of the time-of-flight (TOF) technique in PET \cite{lewellen1998}, that phenomenon can be effectively alleviated to large extent. The differential timing of the detection of the two photons can be used to localize the annihilation along the line of response (LOR) in TOF-PET imaging systems.  On the one hand, the availability of TOF information enables to accelerate existing iterative reconstruction algorithms \cite{wang2006}. On the other hand, it can help to improve the contrast-to-noise ratio of reconstructed images \cite{karp2008,moses2003}. Moreover, some studies demonstrated that the use of TOF can also strikingly relieve the artifacts induced by attenuation correction errors \cite{conti2010}. These benefits taken by TOF technique encourage researchers to devote more attention to the simultaneous estimation issue in TOF-PET. In \cite{defrise2012}, Defrise et al. provided a result that the attenuation sinogram can be determined by the emission data alone up to a constant based on the consistency conditions for the continuous TOF-PET model. At the same time, Rezaei et al. \cite{rezaei2012} explained why the use of TOF information can eliminate the cross-talk problem. Furthermore, they developed a study of the joint recovery of activity and attenuation correction factors for the discrete TOF-PET model using ML \cite{rezaei2014}, and Li et al. proposed a joint estimation of activity and attenuation sinogram using the consistency condition of the TOF-PET data \cite{li2017joint}.

\paragraph{Main contributions.} In this article, we study the simultaneous reconstruction of the activity and attenuation sinogram using only the TOF-PET emission data by optimizing a discrete negative log-likelihood estimation with some appropriate constraints, and propose an efficient and robust algorithm accordingly, which is actually a Maximum Likelihood reconstruction for joint Activity and Attenuation Sinogram, referred to as MLAAS. Additionally, the thorough and detailed theoretical analyses of the proposed model and algorithm are also provided. To our best knowledge, it is the first literature to prove the existence, uniqueness and stability for such kind of simultaneous reconstruction problems in discrete case. In contrast to \cite{rezaei2012}, there is no need for the proposed approach to reconstruct the attenuation map. Compared with \cite{rezaei2014}, the proposed model can exploit the intrinsically exponential form, a prior knowledge about the attenuation correction factors, which can justify a faster and more stable convergent performance. Furthermore, the current work introduces a weaker constraint for the total amount of the activity in some mask region, rather than the whole field of view in \cite{ren2024}, to overcome the problem of solutions differing up to a constant.

\paragraph{Outline.} The rest of the paper is organized as follows. In \cref{sec:MM}, we propose the considered mathematical model and prove its well-posedness, including solution existence, uniqueness and stability. Next, in \cref{sec:method}, the computational method for solving the proposed model is presented, and its convergence analysis is also studied. The relationship to some existing models is given in \cref{sec:comparison}. We conduct several numerical experiments to assess the effectiveness of the proposed method and to make numerical comparisons with those models in \cref{sec:experiments}. In terms of these results, we give a necessary discussion in \cref{discussion}. Finally, we conclude the work in \cref{conclusion}.

\section{The proposed mathematical model and its well-posedness}
\label{sec:MM}

First of all, we introduce some notations which are requisite for description. Let $\mathbb{R}^D$ be a $D$-dimensional Euclidean space with vector as $\bx = [x_1, \ldots, x_D]^{\tra}$, which equips with an inner product $\langle \bx, \by\rangle = \bx^{\tra}\by$ and the norm $\|\cdot\| = \sqrt{\langle\Cdot, \Cdot\rangle}$. Define $\mathbb{R}_{+}^D := \{\bx \in \mathbb{R}^D | x_d \ge 0, d = 1, \dots, D\}$, $\mathbb{N}$ as the set of all positive integers, and $\boldsymbol{1}$ as the vector composed of unit entries as the same dimension as the object of operation. Write $\bx \ge (>)~c$ or $\le (<)~c$ as each element of itself is not less (greater) than $c$ or not greater (less) than $c$. Denote Kullback--Leibler (KL) divergence by 
\begin{equation}\label{eq:KL_div}
\mathbf{KL}(a, b) := b - a + a\ln (a/b) \ge 0 \quad \text{for}~a, b >0. 
\end{equation}

\subsection{The proposed mathematical model}
\label{sec:model}

Generally, in TOF-PET, for LOR $i$ and TOF bin $t$, denoted by $(i, t)$, the collected count $M_{it}$ can be modeled as a random variable with expectation \begin{equation}\label{mean_data}
    \bar{m}_{it} = \exp{(-s_i)}\sum_{j=1}^{J}a_{ijt}\lambda_{j}, \quad i=1,\ldots,I, ~t=1,\ldots,T, 
\end{equation}
where $s_i$ denotes the sinogram element of the attenuation image of interest on LOR $i$, the whole $\exp{(-s_i)}$ is served as the associated attenuation correction factor, $\lambda_{j}$ is the activity value at pixel/voxel $j$, and $a_{ijt}$ represents the element of TOF sensitivity matrix at measurement $(i, t)$ for activity $j$ in the absence of attenuation. Note that the expected additive contribution due to scatter and/or randoms is not considered in this work. In order to better understand the formula of \cref{mean_data}, we recommend referring to its continuous form in \cite{defrise2012}.  


Here, the random variables $\{M_{it}\}_{i, t}$ are assumed to be independent of each other, and are Poisson distributed respectively as 
\begin{equation}
    \label{Poisson}
    M_{it}\sim \textrm{Poisson}(\bar{m}_{it}). 
\end{equation}
Let $m_{it}$ be the measured data or a realization of $M_{it}$, namely, the number of events detected for $(i, t)$. Assuming that $m_{it} > 0$ and $\bar{m}_{it} > 0$ for any $(i, t)$, which means that each LOR contains activity. Denote $\bdm = \{m_{it}\}_{i,t}$, and $\bar{\bdm} = \{\bar{m}_{it}\}_{i,t}$. Using \cref{mean_data}, the negative log-likelihood function can be given as 
\begin{align} 
 \nonumber   L(\bdlambda, \bds; \bdm) &= \sum_{i,t}\{\bar{m}_{it} - m_{it}\ln{\bar{m}_{it}}\} \\ 
\label{neg_log_likelihood}   &= \sum_{i,t} \left\{\exp{(-s_i)}\sum_{j=1}^{J}a_{ijt}\lambda_{j} - m_{it}\ln\left(\exp{(-s_i)}\sum_{j=1}^{J}a_{ijt}\lambda_{j}\right)\right\}, 
\end{align}
where $\bdlambda = [\lambda_1, \ldots, \lambda_J]^{\tra}$, and $\bds = [s_1, \ldots, s_I]^{\tra}$. 

In this article, the aim is to reconstruct activity $\boldsymbol{\lambda}$ and attenuation sinogram $\boldsymbol{s}$ simultaneously by minimizing \cref{neg_log_likelihood} under some appropriate constraints. 
More precisely, the proposed mathematical model, namely, maximum likelihood estimation for simultaneously reconstructing activity and attenuation sinogram can be formulated as
\begin{align}
\label{equ1} &\min_{\bdlambda , \bds} L(\bdlambda, \bds; \bdm) \\ 
\label{eq_constraint_1} &\hspace{1mm} \text{s.t.}  ~ 0 \le \bdlambda \le \Lambda, \\
\label{eq_constraint_2} &\hspace{7mm} \bds \ge 0, \\
\label{eq_constraint} &\hspace{7mm} \bdOneMask^{\tra}\bdlambda = N, 
\end{align}
where $\Lambda$ represents an upper bound of the activity, $\bdOneMask$ denotes a vector where the elements are $1$ at some specified positions in the mask region and $0$ elsewhere, and $N$ is the total amount of the activity in the mask region. 

\begin{remark}\label{rem:rk1}
Evidently, the function in \cref{neg_log_likelihood} is of scale invariance, i.e., 
\begin{equation}
\label{times relation}
    L(\bdlambda, \bds; \bdm) = L(\bdlambda \exp{(\gamma)}, \bds+\gamma\boldsymbol{1}; \bdm) \quad \forall \gamma\in\mathbb{R}.
\end{equation}
Hence, we introduce a weaker constraint of \cref{eq_constraint} than that in \cite{ren2024} to overcome the problem of solutions differing up to a constant. Although the objective function in \cref{equ1} has been considered in the work of \cite{li2017joint}, the proposed model and solving algorithm here are quite different, and the related theory is not established there either. 
\end{remark}

\subsection{Well-posedness}\label{set:wellposedness}

In this section, we prove the well-posedness of the proposed model \cref{equ1}-\cref{eq_constraint}, including the existence, uniqueness and stability of the solution. 

\subsubsection{Existence and uniqueness}\label{analysis}

Here we will demonstrate that the proposed model \cref{equ1}-\cref{eq_constraint} is exactly solvable, namely, there exists at least a minimizer that solves the considered problem. To this end, we first give the following result. 
\begin{lemma}
\label{Analemma1}
Given the measured data $\bdm > 0$, the function $L(\Cdot, \Cdot; \bdm)$ in \cref{neg_log_likelihood} is lower bounded in $\mathbb{R}_{+}^{J}\times\mathbb{R}_{+}^{I}$.
\end{lemma}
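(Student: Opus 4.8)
The plan is to reduce the claim to the elementary one-dimensional fact that, for every fixed $c>0$, the map $y\mapsto y-c\ln y$ is bounded below on $(0,\infty)$, and then to sum over the finitely many indices $(i,t)$. Concretely, I would first rewrite each summand of \cref{neg_log_likelihood} in terms of the KL divergence \cref{eq:KL_div}. Since
\[
\mathbf{KL}(m_{it}, \bar m_{it}) = \bar m_{it} - m_{it} + m_{it}\ln(m_{it}/\bar m_{it}),
\]
we have $\bar m_{it} - m_{it}\ln\bar m_{it} = \mathbf{KL}(m_{it},\bar m_{it}) + \bigl(m_{it} - m_{it}\ln m_{it}\bigr)$, and therefore
\[
L(\bdlambda,\bds;\bdm) = \sum_{i,t}\mathbf{KL}(m_{it},\bar m_{it}) \;+\; \sum_{i,t}\bigl(m_{it}-m_{it}\ln m_{it}\bigr).
\]

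Next I would invoke the nonnegativity of the KL divergence recorded in \cref{eq:KL_div} to discard the first sum, obtaining
\[
L(\bdlambda,\bds;\bdm) \;\ge\; \sum_{i,t}\bigl(m_{it}-m_{it}\ln m_{it}\bigr) \;=:\; C(\bdm),
\]
a finite constant determined solely by the measured data $\bdm>0$, the sum running over the finitely many pairs $(i,t)$ with $1\le i\le I$, $1\le t\le T$. This is precisely the asserted lower bound on $\mathbb{R}_{+}^{J}\times\mathbb{R}_{+}^{I}$. Equivalently, one may argue directly and without the KL rewriting: the scalar function $y\mapsto y-m_{it}\ln y$ attains its global minimum on $(0,\infty)$ at $y=m_{it}$ (set the derivative $1-m_{it}/y$ to zero), with value $m_{it}-m_{it}\ln m_{it}$, so each summand of \cref{neg_log_likelihood} is bounded below by that value.

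The only point requiring a word of care is the boundary of the closed orthant: when some component of $\bdlambda$ vanishes, a factor $\bar m_{it}=\exp(-s_i)\sum_{j}a_{ijt}\lambda_{j}$ may become $0$, where $\ln\bar m_{it}$ — and hence the KL rewriting — is not literally defined. With the usual convention $-m_{it}\ln 0=+\infty$ (consistent with $\mathbf{KL}(m_{it},0)=+\infty$), the corresponding summand, and thus $L$ itself, equals $+\infty$ there, which only reinforces the inequality $L\ge C(\bdm)$; so the statement holds verbatim on $\mathbb{R}_{+}^{J}\times\mathbb{R}_{+}^{I}$. I do not expect any genuine obstacle here — this is a soft preliminary whose purpose is to underpin the coercivity and existence argument that follows; the only thing to be slightly careful about is bookkeeping on the boundary of the domain.
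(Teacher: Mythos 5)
Your proposal is correct and follows essentially the same route as the paper's proof: rewrite each summand via the KL divergence \cref{eq:KL_div} and discard the nonnegative KL terms to obtain the lower bound $\sum_{i,t}(m_{it}-m_{it}\ln m_{it})$. Your extra remark about the boundary of the orthant is fine but not needed, since the paper assumes $\bar{m}_{it}>0$ for all $(i,t)$ in the model setup.
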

\begin{proof}
By simple derivation, and using \cref{eq:KL_div}, we have 
\begin{align*}
        L(\bdlambda, \bds; &\bdm) \\
        &= \sum_{i,t}\left\{\exp{(-s_i)}  \sum_{j=1}^{J}a_{ijt}\lambda_j - m_{it} + m_{it}\ln\left(\frac{m_{it}}{\exp{(-s_i)}\sum_{j=1}^{J}a_{ijt}\lambda_j}\right)\right\}\\
        &\hspace{4mm} +\sum_{i,t}\left\{m_{it} -m_{it}\ln m_{it}\right\}\\
        &=\sum_{i,t}\mathbf{KL}\left(m_{it}, \exp{(-s_i)}\sum_{j=1}^{J}a_{ijt}\lambda_{j}\right)+\sum\limits_{i,t}\left\{m_{it} -m_{it}\ln m_{it} \right\}\\
        &\ge \sum_{i,t}\left\{m_{it} -m_{it}\ln m_{it}\right\}. 
\end{align*}
Then, the desired result is proved. 
\end{proof} 

To proceed, we define $g_b(x) = x - b\ln x$ for $x>0$. Then the following proposition is valid. 
\begin{proposition}
\label{stability_lemma2}
Let $b>0$. For $\sigma >1$, there exist two positive numbers $K_1(\sigma, b) < K_2(\sigma, b)$ such that $g_b\bigl(K_1(\sigma,b)\bigr)=g_b\bigl(K_2(\sigma,b)\bigr)=\sigma$ and 
\[
[K_1(\sigma,b),K_2(\sigma,b)] = \{x > 0 | g_b(x) \le \sigma \}. 
\]
Furthermore, if $b_2>b_1>0$, then 
\begin{equation}
    K_1(\sigma,b_1) < K_1(\sigma,b_2)<1~\textrm{and}~ 1<K_2(\sigma,b_1) < K_2(\sigma,b_2).
\end{equation}
\end{proposition}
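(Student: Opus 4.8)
The plan is to reduce everything to the elementary shape of $g_b$ on $(0,\infty)$. First I would record that $g_b'(x) = 1 - b/x$ and $g_b''(x) = b/x^2 > 0$, so $g_b$ is strictly convex, strictly decreasing on $(0,b)$, strictly increasing on $(b,\infty)$, attains its unique global minimum at $x = b$ with value $g_b(b) = b - b\ln b$, and is coercive in the sense that $g_b(x) \to +\infty$ both as $x \to 0^+$ and as $x \to +\infty$. Two scalar facts drive the proof: the identity $g_b(1) = 1$ (valid for every $b$), and the uniform bound $g_b(b) \le 1$ for all $b > 0$ --- the latter because, writing $h(b) := b - b\ln b$, we have $h'(b) = -\ln b$, so $h$ is maximized at $b = 1$ with $h(1) = 1$ (equality in $g_b(b) \le 1$ holding only when $b = 1$).

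Given $\sigma > 1$, combining $\sigma > 1 \ge g_b(b)$ shows that $\sigma$ strictly exceeds the minimum of $g_b$, so the sublevel set $S := \{x > 0 : g_b(x) \le \sigma\}$ has nonempty interior; by strict convexity it is an interval and by coercivity it is compact, hence $S = [K_1(\sigma,b), K_2(\sigma,b)]$ with $K_1(\sigma,b) < K_2(\sigma,b)$, and strict monotonicity of $g_b$ on each side of $b$ forces $g_b(K_1(\sigma,b)) = g_b(K_2(\sigma,b)) = \sigma$ together with $0 < K_1(\sigma,b) < b < K_2(\sigma,b)$. Since $g_b(1) = 1 < \sigma$, the point $1$ lies in the interior of $S$, giving $K_1(\sigma,b) < 1 < K_2(\sigma,b)$; equivalently, $K_1(\sigma,b)$ is the unique root of $g_b(x) = \sigma$ in $(0,1)$ and $K_2(\sigma,b)$ the unique root in $(1,\infty)$.

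For the monotonicity in $b$, I would fix $\sigma > 1$ and $b_2 > b_1 > 0$ and use the standard comparison device of evaluating $g_{b_2}$ at the roots associated to $b_1$. For $K_1$: since $K_1(\sigma,b_1) < b_1 < b_2$ and $\ln K_1(\sigma,b_1) < 0$, one gets $g_{b_2}(K_1(\sigma,b_1)) = K_1(\sigma,b_1) - b_2\ln K_1(\sigma,b_1) > K_1(\sigma,b_1) - b_1\ln K_1(\sigma,b_1) = \sigma$; because $g_{b_2}$ is strictly decreasing on $(0,b_2)$ and equals $\sigma$ there only at $K_1(\sigma,b_2)$, this yields $K_1(\sigma,b_1) < K_1(\sigma,b_2)$, and $K_1(\sigma,b_2) < 1$ was shown above. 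For $K_2$: since $K_2(\sigma,b_1) > 1$ and $\ln K_2(\sigma,b_1) > 0$, the same computation gives $g_{b_2}(K_2(\sigma,b_1)) < \sigma$; since on $(1,\infty)$ the function $g_{b_2} - \sigma$ is negative exactly on $(1, K_2(\sigma,b_2))$ --- which follows from $g_{b_2}(1) = 1 < \sigma$ and the decreasing-then-increasing profile of $g_{b_2}$ --- we conclude $K_2(\sigma,b_1) < K_2(\sigma,b_2)$, and $K_2(\sigma,b_1) > 1$ was shown above.

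The only place I expect genuine care to be needed is this last step: when $b_2 > 1$, $g_{b_2}$ is not monotone on all of $(1,\infty)$, so one cannot argue purely by monotonicity as in the $K_1$ case. This is dispatched either by pinning down the sign pattern of $g_{b_2} - \sigma$ on $(1,\infty)$ as above, or by splitting into the cases $K_2(\sigma,b_1) \ge b_2$ (argue by strict monotonicity of $g_{b_2}$ on $(b_2,\infty)$) and $K_2(\sigma,b_1) < b_2$ (then $K_2(\sigma,b_1) < b_2 < K_2(\sigma,b_2)$ directly, since always $K_2(\sigma,b_2) > b_2$). Everything else is one-variable calculus; the two facts $g_b'' > 0$ and $g_b(b) \le 1$ are exactly what make the hypothesis $\sigma > 1$ the sharp one.
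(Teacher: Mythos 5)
Your proposal is correct and follows essentially the same route as the paper: analyze the decreasing--increasing profile of $g_b$ with minimum $g_b(b)\le 1$ and limits $+\infty$ at both ends to get the two roots and the sublevel interval, use $g_b(1)=1<\sigma$ to place $1$ strictly between them, and compare $K_1,K_2$ across $b_1<b_2$ by evaluating $g_{b_2}$ at the roots of $g_{b_1}$ via the sign of $(b_1-b_2)\ln x$. Your extra care about the non-monotonicity of $g_{b_2}$ on $(1,\infty)$ in the $K_2$ step is a welcome refinement of a point the paper passes over with ``similarly,'' but it is the same argument.
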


\begin{proof}
The first derivative of $g_b$ is calculated by   
\begin{equation*}
    g'_b(x) = 1- \frac{b}{x}.
\end{equation*}
Obviously, if $0 < x < b$, $g'_b(x) < 0$; if $x = b$, $g'_b(x) = 0$; if $x>b$, $g'_b(x) > 0$. This suggests that the function $g_b$ is strictly decreasing over $(0, b)$ but increasing over $(b, +\infty)$, and its unique minimum is taken at $x=b$. Furthermore, it is simple to verify that the minimum $g_b(b) \le 1$, and 
\begin{equation*}
    \lim_{x\to 0^{+}}g_b(x) = +\infty, \quad \lim_{x\to +\infty}g_b(x) = +\infty.
\end{equation*}
Consequently, for $\sigma > 1$, the equation $g_b(x) = \sigma$ is solvable, and admits two different solutions denoted by $K_1(\sigma,b) < K_2(\sigma,b)$. Combined with the distribution behavior of $g_b$, it turns out that $g_b(x)\le\sigma$ iff $x\in [K_1(\sigma,b),K_2(\sigma,b)]$. Note that for any $b$, $g_{b}(1) = 1$. Hence, for $b_2>b_1>0$, 
\begin{equation*}
    K_1(\sigma,b_1)<1<K_2(\sigma,b_1)~\textrm{and}~K_1(\sigma,b_2)<1<K_2(\sigma,b_2).
\end{equation*}
Since 
\begin{equation*}
    g_{b_2}(x)-g_{b_1}(x) = (b_1-b_2)\ln x,
\end{equation*}
we know that $g_{b_2}(x) > g_{b_1}(x)$ for $0<x<1$, and $g_{b_2}(x) < g_{b_1}(x)$ for $x>1$. Then 
\begin{equation*}
    \sigma = g_{b_1}(K_1(\sigma,b_1)) = g_{b_2}\bigl(K_1(\sigma,b_2)\bigr) > g_{b_1}\bigl(K_1(\sigma,b_2)\bigr),
\end{equation*}
which implies that $K_1(\sigma,b_1) < K_1(\sigma,b_2)$, similarly, 
\begin{equation*}
    g_{b_2}\bigl(K_2(\sigma,b_1)\bigr) < g_{b_1}\bigl(K_2(\sigma,b_1)\bigr) = g_{b_2}\bigl(K_2(\sigma,b_2)\bigr) = \sigma,
\end{equation*}
then $K_2(\sigma, b_1) < K_2(\sigma,b_2)$. Thus the proof is concluded. 
\end{proof}

For simplicity, let  
\[
\Xspace = \{\bdlambda \in \mathbb{R}^J | 0 \le \bdlambda \le \Lambda\}\times\mathbb{R}_{+}^{I}, 
\]
and $a_{max} = \max_{i,j,t}\{a_{ijt}\}$, $m_{min} = \min_{i,t}\{m_{it}\}$. 
\begin{lemma}\label{Analemma2}
Assume that $a_{max} > 0$, and the measured data $\bdm > 0$. The function $L(\Cdot, \Cdot; \bdm)$ of \cref{neg_log_likelihood} is coercive in $\Xspace$.
\end{lemma}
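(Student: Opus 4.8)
Coercivity of $L(\Cdot,\Cdot;\bdm)$ on $\Xspace$ means $L(\bdlambda,\bds;\bdm)\to+\infty$ whenever $\|(\bdlambda,\bds)\|\to+\infty$ within $\Xspace$. The first observation to exploit is that the activity block is trapped in the box $[0,\Lambda]^J$, so $\|\bdlambda\|\le\sqrt{J}\,\Lambda$ is bounded; hence a point of $\Xspace$ can only escape to infinity through the attenuation block, i.e. $\|\bds\|\to+\infty$ with $\bds\ge\bzero$, which forces $s_{i^\ast}\to+\infty$ for at least one coordinate $i^\ast$ (after passing to a subsequence we may take $i^\ast$ fixed). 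Concretely, I would argue by contradiction: if coercivity failed there would be a sequence $(\bdlambda^{(n)},\bds^{(n)})\in\Xspace$ with $\|(\bdlambda^{(n)},\bds^{(n)})\|\to+\infty$ along which $L$ stays bounded, and then $\max_i s_i^{(n)}\to+\infty$, so along a further subsequence a single index $i^\ast$ has $s_{i^\ast}^{(n)}\to+\infty$. I would also record the elementary one–variable facts already used in \cref{stability_lemma2}: $g_b(x)=x-b\ln x$ attains its global minimum $g_b(b)=b-b\ln b$ on $(0,+\infty)$ and $g_b(x)\to+\infty$ as $x\to0^{+}$; and, since $L(\bdlambda,\bds;\bdm)=\sum_{i,t}g_{m_{it}}(\bar m_{it})$ with every $\bar m_{it}=\exp(-s_i)\sum_j a_{ijt}\lambda_j\ge0$, the same $\mathbf{KL}\ge0$ estimate as in \cref{Analemma1} gives termwise $g_{m_{it}}(\bar m_{it})\ge m_{it}-m_{it}\ln m_{it}$.

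The core of the plan is then to isolate one diverging summand. Fixing a TOF bin, say $t=1$, I would keep the $(i^\ast,1)$ term and replace every other term by its minimum:
\[
L(\bdlambda,\bds;\bdm)\ \ge\ g_{m_{i^\ast 1}}\!\bigl(\bar m_{i^\ast 1}\bigr)\ +\ \sum_{(i,t)\ne(i^\ast,1)}\bigl(m_{it}-m_{it}\ln m_{it}\bigr)\ =\ g_{m_{i^\ast 1}}\!\bigl(\bar m_{i^\ast 1}\bigr)+C_1,
\]
where $C_1$ is a finite constant depending only on $\bdm$ (and on the finitely many possible values of $i^\ast$). For the retained term I would discard the nonnegative summand $\bar m_{i^\ast 1}$ and insert the exponential structure:
\[
g_{m_{i^\ast 1}}\!\bigl(\bar m_{i^\ast 1}\bigr)\ \ge\ -\,m_{i^\ast 1}\ln\bar m_{i^\ast 1}\ =\ m_{i^\ast 1}\,s_{i^\ast}\ -\ m_{i^\ast 1}\ln\!\Bigl(\textstyle\sum_{j}a_{i^\ast j1}\lambda_j\Bigr).
\]
Because $0\le\lambda_j\le\Lambda$ and $0\le a_{i^\ast j1}\le a_{max}$, one has $0\le\sum_j a_{i^\ast j1}\lambda_j\le J\,a_{max}\Lambda$, so the logarithmic term is bounded above by a constant independent of the point (and the degenerate case $\sum_j a_{i^\ast j1}\lambda_j=0$ only makes the left-hand side $+\infty$, which is harmless); combined with $s_{i^\ast}\ge0$ and $m_{i^\ast 1}\ge m_{min}$ this yields $L(\bdlambda,\bds;\bdm)\ge m_{min}\,s_{i^\ast}+C_2$ with $C_2$ independent of $(\bdlambda,\bds)\in\Xspace$. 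Letting $s_{i^\ast}^{(n)}\to+\infty$ contradicts boundedness of $L$ along the subsequence, which establishes coercivity.

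The argument is conceptually short; the delicate points are bookkeeping. One must justify the "minimum substitution" for \emph{every} remaining index simultaneously (legitimate by the uniform bound $g_b\ge g_b(b)$), must control the sign of $\ln(\sum_j a_{i^\ast j1}\lambda_j)$ purely from the box constraint $\bdlambda\le\Lambda$ (so that the estimate does not depend on $\bdlambda$ beyond $\Lambda$), and must treat the boundary of $\Xspace$ where some $\bar m_{it}$ vanishes — there $L=+\infty$ and coercivity is trivial. I expect the main (still minor) obstacle to be phrasing the reduction cleanly so that a \emph{single} coordinate $i^\ast$ can be fixed along a subsequence, and making explicit that all the constants produced are uniform over $\Xspace$.
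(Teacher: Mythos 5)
Your proof is correct and follows essentially the same route as the paper: both rest on the termwise decomposition $L=\sum_{i,t}g_{m_{it}}(\bar{m}_{it})$ with $g_b$ bounded below by its minimum, the box bound $\|\bdlambda\|\le\sqrt{J}\Lambda$, and the uniform upper bound $\sum_j a_{ijt}\lambda_j\le J\Lambda a_{max}$ to control $\bds$. The only cosmetic difference is that you extract the divergence directly from the linear term $m_{i^\ast 1}s_{i^\ast}$ after dropping $\bar{m}_{i^\ast 1}\ge 0$ and arguing by contradiction along a subsequence, whereas the paper routes the same estimate through the sublevel-set characterization of $g_b$ in \cref{stability_lemma2} to obtain the explicit uniform bound $s_i^n\le\ln\bigl(J\Lambda a_{max}/K_1(\tilde{C},m_{min})\bigr)$.
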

\begin{proof}
To prove the coercivity, we need to verify that if there exists a sequence $\{(\bdlambda^n, \bds^n)\}_{n\in\mathbb{N}} \in \Xspace$ such that $\{L(\bdlambda^{n}, \bds^{n}; \bdm)\}_{n\in\mathbb{N}}$ is bounded, then the sequence $\{(\bdlambda^{n},\bds^{n})\}_{n\in\mathbb{N}}$ is also bounded. 

Suppose that 
\begin{equation}\label{eq:L_upbound}
L(\bdlambda^{n}, \bds^{n}; \bdm) \le C \quad \forall n\in \mathbb{N}, 
\end{equation}
where $C$ is a positive constant. 

Since 
\begin{equation}\label{eq:upbound_lam}
    0 \le \bdlambda^n \le \Lambda \quad \forall n\in \mathbb{N}, 
\end{equation}
we have the estimate that 
\begin{equation}
\label{Anaproof2_fisrt}
    \|\bdlambda^{n}\|\le \sqrt{J}\Lambda. 
\end{equation}
In addition, since 
\begin{align*}
    L(\bdlambda^{n}, \bds^{n};& \bdm) \\
    &=  \sum_{i,t}\left\{\exp{(-s_{i}^{n})} \sum_{j=1}^{J}a_{ijt}\lambda_{j}^{n} - m_{it}\ln\left(\exp{(-s_{i}^{n})} \sum_{j=1}^{J}a_{ijt}\lambda_{j}^{n}\right)\right\}\le C,
\end{align*}
by the proof of \cref{Analemma1},  and recalling the definition of $\mathbf{KL}(\Cdot, \Cdot)$ in \cref{eq:KL_div}, it follows that 
\begin{equation*}
     \sum_{i,t}\mathbf{KL}\left(m_{it}, \exp{(-s_{i}^{n})}\sum\limits_{j=1}^{J}a_{ijt}\lambda_{j}^{n}\right) \le C - E, 
\end{equation*} 
where $E = \sum_{i,t}\{m_{it} -m_{it}\ln m_{it}\}$. Using the nonnegativity of $\mathbf{KL}(\Cdot, \Cdot)$, we obtain 
\begin{equation*}
     \mathbf{KL}\left(m_{it}, \exp{(-s_{i}^{n})}\sum\limits_{j=1}^{J}a_{ijt}\lambda_{j}^{n}\right) \le C - E \quad \forall (i, t). 
\end{equation*} 
From the proof of \cref{stability_lemma2}, we know that $g_b(b) \le 1$ for all $b>0$. Then, by the inequality above, we have    
\begin{equation}\label{Anaequ2}
    \exp{(-s_{i}^{n})}\sum_{j=1}^{J}a_{ijt}\lambda_{j}^{n} - m_{it}\ln\left(\exp{(-s_{i}^{n})}\sum_{j=1}^{J}a_{ijt}\lambda_{j}^{n}\right)\le \tilde{C} \quad\forall (i, t). 
\end{equation}
Here $\tilde{C} = C + IT +1$. By the proof of \cref{stability_lemma2}, from \cref{Anaequ2}, we obtain 
\begin{equation*}
   0 < K_1(\tilde{C}, m_{min}) \le \exp{(-s_{i}^{n})}\sum_{j=1}^{J}a_{ijt}\lambda_{j}^{n}\quad\forall (i, t). 
\end{equation*}
Using \cref{eq:upbound_lam}, and $a_{max} > 0$, we have 
\begin{equation*}
     0\le s_{i}^{n}\le \ln\left(\frac{J\Lambda a_{max}}{K_1(\tilde{C}, m_{min})}\right) \quad \forall i.
\end{equation*}
Furthermore,  
\begin{equation}
\label{Anaproof2_second}
    \|\bds^{n}\|\le \sqrt{I}\ln\left(\frac{J\Lambda a_{max}}{K_1(\tilde{C}, m_{min})}\right).
\end{equation}
Hence, the boundedness of the sequence $\{(\bdlambda^{n},\bds^{n})\}_{n\in\mathbb{N}}$ is immediately obtained by \cref{Anaproof2_fisrt} and \cref{Anaproof2_second}. 
\end{proof}

Subsequently, we can give the following results on the existence and uniqueness of the solution to the optimization problem 
\begin{equation}
\label{eq:min_L_X} \min_{(\bdlambda , \bds) \in \Xspace} L(\bdlambda, \bds; \bdm). 
\end{equation}

\begin{theorem}\label{existing_theorem}
Let the assumptions in \cref{Analemma2} hold. The solution to the problem in \cref{eq:min_L_X} exists.
\end{theorem}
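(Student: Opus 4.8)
The plan is to apply the direct method of the calculus of variations---closedness of the feasible set, coercivity, and (lower semi)continuity of the objective---with the two lemmas just proved supplying the last two ingredients. First I would note that, by \cref{Analemma1}, the functional $L(\Cdot,\Cdot;\bdm)$ is bounded below on $\Xspace\subseteq\mathbb{R}_{+}^{J}\times\mathbb{R}_{+}^{I}$, so
\[
\ell:=\inf_{(\bdlambda,\bds)\in\Xspace} L(\bdlambda,\bds;\bdm)
\]
is a finite real number; here the infimum runs over a nonempty subset of the effective domain of $L$, since the modelling assumption $\bar m_{it}>0$ forces each measurement $(i,t)$ to have some $a_{ijt}>0$, whence $(\varepsilon\boldsymbol{1},\boldsymbol{0})\in\Xspace$ satisfies $L(\varepsilon\boldsymbol{1},\boldsymbol{0};\bdm)<\infty$ for small $\varepsilon>0$. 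I would then pick a minimizing sequence $\{(\bdlambda^{n},\bds^{n})\}_{n\in\mathbb{N}}\subset\Xspace$ with $L(\bdlambda^{n},\bds^{n};\bdm)\to\ell$, so that $L(\bdlambda^{n},\bds^{n};\bdm)\le\ell+1=:C$ for all large $n$; by the coercivity established in \cref{Analemma2}, this bound forces $\{(\bdlambda^{n},\bds^{n})\}_{n\in\mathbb{N}}$ to be bounded in $\mathbb{R}^{J}\times\mathbb{R}^{I}$.

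Next I would extract, via Bolzano--Weierstrass, a subsequence (not relabeled) with $(\bdlambda^{n},\bds^{n})\to(\bdlambda^{*},\bds^{*})$. Since $\Xspace=\{\bdlambda\mid 0\le\bdlambda\le\Lambda\}\times\mathbb{R}_{+}^{I}$ is a closed subset of $\mathbb{R}^{J}\times\mathbb{R}^{I}$, the limit satisfies $(\bdlambda^{*},\bds^{*})\in\Xspace$. The crucial remaining point is to check that $L(\Cdot,\Cdot;\bdm)$ is continuous at this limit, and for that I would reuse an intermediate estimate from the proof of \cref{Analemma2}: along the bounded minimizing sequence one obtains, with $\tilde C=C+IT+1$ and for every $(i,t)$,
\[
\exp{(-s_{i}^{n})}\sum_{j=1}^{J}a_{ijt}\lambda_{j}^{n}\ \ge\ K_{1}\bigl(\tilde C,m_{min}\bigr)\ >\ 0,
\]
where $K_{1}$ is the number furnished by \cref{stability_lemma2}. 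Passing to the limit yields $\exp{(-s_{i}^{*})}\sum_{j}a_{ijt}\lambda_{j}^{*}\ge K_{1}(\tilde C,m_{min})>0$, hence $\sum_{j}a_{ijt}\lambda_{j}^{*}>0$ for all $(i,t)$; thus $(\bdlambda^{*},\bds^{*})$ lies in the open region on which each summand of \cref{neg_log_likelihood}---and therefore $L$ itself---is a finite continuous function (a composition of continuous maps, with $\ln$ evaluated only at positive arguments).

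It then follows that $L(\bdlambda^{*},\bds^{*};\bdm)=\lim_{n}L(\bdlambda^{n},\bds^{n};\bdm)=\ell=\inf_{(\bdlambda,\bds)\in\Xspace}L(\bdlambda,\bds;\bdm)$, so $(\bdlambda^{*},\bds^{*})$ solves \cref{eq:min_L_X} and existence is proved; the same scheme will also cover the fully constrained problem \cref{equ1}--\cref{eq_constraint}, since intersecting $\Xspace$ with the hyperplane $\{\bdOneMask^{\tra}\bdlambda=N\}$ keeps it closed. I expect the main obstacle to be exactly this boundary issue: because $L$ blows up to $+\infty$ as $\sum_{j}a_{ijt}\lambda_{j}\to 0^{+}$, a priori a minimizing sequence could drift toward such a degenerate point and its limit would fall outside the effective domain of $L$, breaking continuity. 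The remedy is the uniform positive lower bound displayed above, which is itself a by-product of combining the $\mathbf{KL}$-splitting of \cref{Analemma1} with the level-set description of $g_{b}$ in \cref{stability_lemma2}; once that is secured, the remaining steps are the routine closed-plus-coercive-plus-continuous existence argument.
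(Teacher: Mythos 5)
Your proof is correct and follows essentially the same route as the paper: extract a bounded minimizing sequence via the lower bound of \cref{Analemma1} and the coercivity of \cref{Analemma2}, pass to a convergent subsequence in the closed set $\Xspace$, and conclude by continuity of $L$. The only difference is that you explicitly justify continuity at the limit point by carrying the uniform lower bound $K_1(\tilde C, m_{min})$ from the proof of \cref{Analemma2} to rule out degeneration of $\sum_j a_{ijt}\lambda_j$, a detail the paper's proof leaves implicit.
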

\begin{proof}
By \cref{Analemma1}, we know that the function $L(\Cdot, \Cdot; \bdm)$ of \cref{neg_log_likelihood} is also lower bounded in $\Xspace$, which indicates the existence of a minimizing sequence of \cref{eq:min_L_X}. 

Let $\{(\bdlambda^n, \bds^n)\}_{n\in\mathbb{N}}$ be the minimizing sequence, namely, 
\begin{equation}\label{eq:mini_seq}
\lim_{n \to\infty} L(\bdlambda^n, \bds^n; \bdm) = \min_{(\bdlambda , \bds) \in \Xspace}L(\bdlambda, \bds; \bdm). 
\end{equation}
So the sequence $\{L(\bdlambda^n, \bds^n; \bdm)\}_{n\in\mathbb{N}}$ is bounded. Using \cref{Analemma2}, we have that $\{(\bdlambda^{n}, \bds^{n})\}_{n\in\mathbb{N}}$ is also bounded. Next, in the finite-dimensional Euclidean space, it follows that there exists a convergent subsequence $\{(\bdlambda^{n_k}, \bds^{n_k})\}_{k\in\mathbb{N}}$ such that 
\[
\lim_{k \to\infty}(\bdlambda^{n_k}, \bds^{n_k}) = (\bdlambda^{\ast},\bds^{\ast}) \in \Xspace. 
\]
By the continuity of the function $L(\Cdot, \Cdot; \bdm)$, and \cref{eq:mini_seq}, we obtain 
\begin{equation*}
    L(\bdlambda^{\ast}, \bds^{\ast}; \bdm) = \min_{(\bdlambda , \bds) \in \Xspace}L(\bdlambda, \bds; \bdm),
\end{equation*}
which shows that $(\bdlambda^{*},\bds^{*})$ is a solution of \cref{eq:min_L_X}. The proof is complete. 
\end{proof}

\begin{theorem}
\label{unique_theorem}
Suppose that the assumptions in \cref{Analemma2} hold, and there exists a point $(\bdlambda^{\diamond}, \bds^{\diamond})\in \Xspace$ such that
\begin{equation}\label{eq:solution_consist}
    m_{it} = \exp{(-s_i^{\diamond})}\sum_j a_{ijt}\lambda_j^{\diamond} \quad \forall (i, t), 
\end{equation}
and furthermore, for each LOR $i$, $A_{i} = (a_{ijt})_{J\times T}$ is full row rank. Then the uniqueness of the solution to \cref{eq:min_L_X} can be determined up to a constant.
\end{theorem}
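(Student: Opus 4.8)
The plan is to use the consistency hypothesis \cref{eq:solution_consist} to identify the exact minimum value of $L$ on $\Xspace$, then to show that every minimizer reproduces the data exactly, and finally to invoke the full-row-rank condition to force any two minimizers to differ by the scale transformation of \cref{rem:rk1} (see \cref{times relation}). First I would note that, exactly as in the proof of \cref{Analemma1}, whenever $\exp(-s_i)\sum_j a_{ijt}\lambda_j>0$ for all $(i,t)$,
\[
L(\bdlambda,\bds;\bdm)=\sum_{i,t}\mathbf{KL}\!\left(m_{it},\,\exp(-s_i)\sum_{j}a_{ijt}\lambda_j\right)+E,\qquad E:=\sum_{i,t}\{m_{it}-m_{it}\ln m_{it}\}.
\]
Since $\mathbf{KL}(\Cdot,\Cdot)\ge 0$, this gives $L\ge E$ on $\Xspace$, while the point $(\bdlambda^{\diamond},\bds^{\diamond})$ of \cref{eq:solution_consist} makes every $\mathbf{KL}$ term vanish, so $L(\bdlambda^{\diamond},\bds^{\diamond};\bdm)=E$. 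Hence $E$ is the minimum and $(\bdlambda^{\diamond},\bds^{\diamond})$ is a minimizer.

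Next I would show that any minimizer matches the data. Let $(\bdlambda^{\ast},\bds^{\ast})\in\Xspace$ attain the minimum; if $\exp(-s_i^{\ast})\sum_j a_{ijt}\lambda_j^{\ast}=0$ for some $(i,t)$ then $L=+\infty$ by \cref{neg_log_likelihood}, a contradiction, so all these quantities are positive and, since $L(\bdlambda^{\ast},\bds^{\ast};\bdm)=E$, every $\mathbf{KL}$ term is zero. Because $\mathbf{KL}(a,b)=g_a(b)-g_a(a)$ and, by the analysis of $g_b$ in the proof of \cref{stability_lemma2}, $g_a$ has its unique minimum at $b=a$, vanishing of a $\mathbf{KL}$ term forces $\exp(-s_i^{\ast})\sum_j a_{ijt}\lambda_j^{\ast}=m_{it}$ for all $(i,t)$, and likewise $\exp(-s_i^{\diamond})\sum_j a_{ijt}\lambda_j^{\diamond}=m_{it}$ by hypothesis.

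Finally I would fix a LOR $i$ and subtract these two identities: $\sum_{j}a_{ijt}\bigl(\exp(-s_i^{\ast})\lambda_j^{\ast}-\exp(-s_i^{\diamond})\lambda_j^{\diamond}\bigr)=0$ for $t=1,\dots,T$, i.e.\ $A_i^{\tra}\bdv=\bzero$ with $\bdv=\bigl(\exp(-s_i^{\ast})\lambda_j^{\ast}-\exp(-s_i^{\diamond})\lambda_j^{\diamond}\bigr)_{j}$. Since $A_i$ has full row rank, $A_i^{\tra}$ has trivial kernel, so $\bdv=\bzero$, that is $\exp(-s_i^{\ast})\lambda_j^{\ast}=\exp(-s_i^{\diamond})\lambda_j^{\diamond}$ for all $i$ and $j$. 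Because $\exp(-s_i^{\diamond})\sum_j a_{ijt}\lambda_j^{\diamond}=m_{it}>0$ and $a_{max}>0$, the set $S=\{j:\lambda_j^{\diamond}>0\}$ is nonempty, and the identity just obtained shows $S=\{j:\lambda_j^{\ast}>0\}$ as well; for $j\in S$ it reads $\exp(s_i^{\ast}-s_i^{\diamond})=\lambda_j^{\ast}/\lambda_j^{\diamond}$, whose left-hand side does not depend on $j$ and whose right-hand side does not depend on $i$, so both equal a constant $\exp(\gamma)$ with $\gamma\in\mathbb{R}$. Then $\bds^{\ast}=\bds^{\diamond}+\gamma\bdOne$, and $\lambda_j^{\ast}=\exp(\gamma)\lambda_j^{\diamond}$ for $j\in S$ and (trivially) for $j\notin S$, so $\bdlambda^{\ast}=\exp(\gamma)\bdlambda^{\diamond}$; hence every minimizer equals $\bigl(\exp(\gamma)\bdlambda^{\diamond},\,\bds^{\diamond}+\gamma\bdOne\bigr)$ for some $\gamma$, which is precisely uniqueness up to a constant.

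The step I expect to need the most care is the second one: the positivity bookkeeping that legitimizes the $\mathbf{KL}$ formalism at a minimizer (ruling out $\exp(-s_i^{\ast})\sum_j a_{ijt}\lambda_j^{\ast}=0$), together with the ensuing nonemptiness of $S$ and the licence to divide by $\lambda_j^{\diamond}$. Once those positivity facts are in place, the linear-algebra argument in the last step via the full-row-rank hypothesis is routine.
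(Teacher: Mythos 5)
Your proof is correct and follows essentially the same route as the paper's: the paper packages your KL-vanishing argument as strict convexity of the auxiliary function $\tilde L(\tilde{\bdm};\bdm)=\sum_{i,t}(\tilde m_{it}-m_{it}\ln\tilde m_{it})$ with unique minimizer $\bdm$, but the substance --- every minimizer must reproduce the data exactly, and full row rank of each $A_i$ then forces $\exp(-s_i^{\ast})\bdlambda^{\ast}=\exp(-s_i^{\diamond})\bdlambda^{\diamond}$ for all $i$ --- is identical. Your endgame is in fact slightly more careful than the paper's, which passes from $\bdlambda^{\triangleright}=\bdlambda^{\diamond}\exp(\gamma_i)$ for all $i$ to $\gamma_1=\cdots=\gamma_I$ without remarking that this requires $\bdlambda^{\diamond}\neq\bzero$ (guaranteed here by $\bdm>0$), a point your nonempty index set $S$ handles explicitly.
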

\begin{proof}
\label{Anaproofthm2}
Based on \cref{neg_log_likelihood}, we consider the function with respect to $\tilde{\bdm} = \{\tilde{m}_{it}\}_{i, t} > 0$ as follows 
\begin{equation}
    \label{reduced_negetive_function}
    \tilde{L}(\tilde{\bdm}; \bdm) = \sum_{i,t}\tilde{m}_{it} - m_{it}\ln{\tilde{m}_{it}}. 
\end{equation}
Since the function $\tilde{L}(\Cdot; \bdm)$ in \cref{reduced_negetive_function} is strictly convex with respect to $\tilde{\bdm}$, its minimizer is uniquely determined, and denoted by $\tilde{\bdm}^{\diamond} = \bdm$. Then, combining with \cref{eq:solution_consist}, we get 
\begin{equation*}
        L(\bdlambda^{\diamond}, \bds^{\diamond}; \bdm) = \tilde{L}(\tilde{\bdm}^{\diamond}; \bdm) \le \tilde{L}(\tilde{\bdm}; \bdm) \quad \forall \tilde{\bdm} > 0.
\end{equation*} 
Particularly, 
\begin{equation*}
    L(\bdlambda^{\diamond}, \bds^{\diamond}; \bdm) \le \tilde{L}(\bar{\bdm}; \bdm) = L(\bdlambda, \bds; \bdm) \quad\forall (\bdlambda,\bds)\in \Xspace.
\end{equation*} 
Hence, $(\bdlambda^{\diamond},\bds^{\diamond})$ is a solution to \cref{eq:min_L_X}. 

If there exists another solution to \cref{eq:min_L_X}, denoted by $(\bdlambda^{\triangleright},\bds^{\triangleright})$, then 
\[
L(\bdlambda^{\triangleright}, \bds^{\triangleright}; \bdm) = L(\bdlambda^{\diamond}, \bds^{\diamond}; \bdm). 
\]
Let $\tilde{\bdm}^{\triangleright} = \{\tilde{m}_{it}^{\triangleright}\}_{i,t}$, where 
\[
\tilde{m}_{it}^{\triangleright} = \exp{(-s_i^{\triangleright})}\sum_j a_{ijt}\lambda_j^{\triangleright} \quad \forall (i, t). 
\]
Then, we have
\[
\tilde{L}(\tilde{\bdm}^{\triangleright}; \bdm) = \tilde{L}(\tilde{\bdm}^{\diamond}; \bdm). 
\]
By the uniqueness of the minimizer of the function $\tilde{L}(\Cdot; \bdm)$ in \cref{reduced_negetive_function}, we know 
\[
\tilde{\bdm}^{\triangleright} = \tilde{\bdm}^{\diamond}. 
\]
Equivalently, 
\[
\exp{(-s_i^{\triangleright})}\sum_j a_{ijt}\lambda_j^{\triangleright}  = \exp{(-s_i^{\diamond})}\sum_j a_{ijt}\lambda_j^{\diamond} \quad \forall (i, t).
\]
We can rewrite the equalities above with regard to $t$ in a compact form into 
\begin{equation*}
       \exp{(-s_i^{\triangleright})} A_i^{\tra} \bdlambda^{\triangleright} = \exp{(-s_i^{\diamond})} A_i^{\tra} \bdlambda^{\diamond} \quad \forall i. 
\end{equation*}
Thus, it follows that 
\begin{equation*}
    A_i^{\tra}\bigl(\bdlambda^{\triangleright} \exp{(-s_i^{\triangleright})} - \bdlambda^{\diamond} \exp{(-s_i^{\diamond})}\bigr) = 0\quad\forall i.
\end{equation*}
Thanks to $A_{i}$ being full row rank for each $i$, we immediately obtain 
\begin{equation}\label{Anathm2equ3}
    \bdlambda^{\triangleright} \exp{(-s_i^{\triangleright})} = \bdlambda^{\diamond} \exp{(-s_i^{\diamond})} \quad \forall i.
\end{equation}
Letting $\gamma_i = s_{i}^{\triangleright} - s_{i}^{\diamond}$, from \cref{Anathm2equ3}, we have 
\begin{equation*}
    \bdlambda^{\triangleright} = \bdlambda^{\diamond} \exp{(\gamma_i)},\quad\forall i.
\end{equation*}
Finally, it must be fulfilled that $\gamma_{1} = \gamma_{2}=\cdots=\gamma_{I}$, and then there is a constant difference between $\bds^{\triangleright}$ and $\bds^{\diamond}$, and the difference between $\bdlambda^{\triangleright}$ and $\bdlambda^{\diamond}$ is a constant multiple of the previous constant interest. Hence, the minimization problem \cref{eq:min_L_X} is uniquely solvable up to a constant. The proof is complete.
\end{proof}
\begin{remark}
\label{remark1}
Note that the result of \cref{unique_theorem} can be viewed as a discrete counterpart of the continuous case in \cite{defrise2012}. To the best of our knowledge, this work is the first in the literature on the uniqueness for the simultaneous reconstruction problem based on the discrete negative log-likelihood. 
\end{remark}

Let 
\[
\Yspace = \{(\bdlambda, \bds) \in \Xspace | \bdOneMask^{\tra} \bdlambda = N\}.
\]
In what follows, we give the results on the existence and uniqueness of the solution to the proposed model \cref{equ1}-\cref{eq_constraint},
equivalently,  
\begin{equation}\label{eq:min_L_Y} 
\min_{(\bdlambda , \bds) \in \Yspace} L(\bdlambda, \bds; \bdm). 
\end{equation}  

\begin{lemma}\label{Analemma4}
Assume that $a_{max} > 0$, and the measured data $\bdm > 0$. The function $L(\Cdot, \Cdot; \bdm)$ of \cref{eq:min_L_Y} is lower bounded and coercive in $\Yspace$. 
\end{lemma}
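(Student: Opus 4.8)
The plan is to obtain both assertions for free from the results already proved on the larger domain, using only the inclusions $\Yspace \subseteq \Xspace \subseteq \mathbb{R}_{+}^{J}\times\mathbb{R}_{+}^{I}$. Indeed, the set $\Yspace$ is just $\Xspace$ intersected with the affine hyperplane $\{\bdOneMask^{\tra}\bdlambda = N\}$, and neither the lower bound of \cref{Analemma1} nor the coercivity argument of \cref{Analemma2} ever exploited that extra equality, so everything transplants verbatim.

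For the lower bound, \cref{Analemma1} gives $L(\bdlambda,\bds;\bdm) \ge \sum_{i,t}\{m_{it} - m_{it}\ln m_{it}\}$ for every $(\bdlambda,\bds) \in \mathbb{R}_{+}^{J}\times\mathbb{R}_{+}^{I}$; a fortiori this holds on the subset $\Yspace$, so $L(\Cdot,\Cdot;\bdm)$ is bounded below on $\Yspace$ by the same constant. For coercivity, I would take an arbitrary sequence $\{(\bdlambda^{n},\bds^{n})\}_{n\in\mathbb{N}} \subseteq \Yspace$ along which $\{L(\bdlambda^{n},\bds^{n};\bdm)\}_{n\in\mathbb{N}}$ is bounded; since $\Yspace \subseteq \Xspace$, this is in particular such a sequence in $\Xspace$, and \cref{Analemma2} applies, yielding the quantitative bounds $\|\bdlambda^{n}\| \le \sqrt{J}\Lambda$ and $\|\bds^{n}\| \le \sqrt{I}\ln\bigl(J\Lambda a_{max}/K_1(\tilde{C},m_{min})\bigr)$, hence the boundedness of $\{(\bdlambda^{n},\bds^{n})\}_{n\in\mathbb{N}}$. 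Therefore $L(\Cdot,\Cdot;\bdm)$ is coercive in $\Yspace$.

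There is no real obstacle: this lemma is a routine "inheritance by a subset" step, recorded so that the machinery of \cref{existing_theorem} (and the uniqueness discussion of \cref{unique_theorem}) can be re-run on the constrained problem \cref{eq:min_L_Y}. The only minor caveat I would perhaps flag is that $\Yspace$ should be nonempty for the statement to be of any use, which holds precisely when $N$ is compatible with the box constraint, i.e. $0 \le N \le \Lambda\,\bdOneMask^{\tra}\bdOne$; but this is irrelevant to the truth of the lemma itself, which is vacuously valid otherwise.
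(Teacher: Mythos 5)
Your proposal is correct and matches the paper's own argument, which likewise dismisses the lemma as an immediate consequence of the proofs of \cref{Analemma1} and \cref{Analemma2} restricted to the subset $\Yspace \subseteq \Xspace$; you have simply written out the inheritance explicitly. The remark about nonemptiness of $\Yspace$ is a reasonable aside but, as you note, does not affect the validity of the statement.
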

\begin{proof}
The lower boundedness and coercivity can be immediately proved from the proofs of \cref{Analemma1,Analemma2}, respectively. So the detail is omitted. 
\end{proof}

\begin{theorem}
\label{existing_theorem_constrained}
Suppose that the assumptions in \cref{Analemma4} hold. The solution to \cref{eq:min_L_Y} exists.
\end{theorem}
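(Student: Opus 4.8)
The plan is to repeat verbatim the direct‑method argument of \cref{existing_theorem}, now run over the constraint set $\Yspace$ rather than $\Xspace$; since \cref{Analemma4} already packages both the lower boundedness and the coercivity of $L(\Cdot,\Cdot;\bdm)$ on $\Yspace$, the only genuinely new point to verify is that $\Yspace$ is closed, so that weak‑$\ast$ limits of feasible points remain feasible (in particular that the linear equality $\bdOneMask^{\tra}\bdlambda=N$ survives the limit). I take it as a standing hypothesis — implicit in the statement, and needed for a minimizing sequence to exist — that $\Yspace\neq\emptyset$, which holds precisely when $0\le N\le (\#\{j:(\bdOneMask)_j=1\})\,\Lambda$.

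Concretely, first I would invoke the lower boundedness in \cref{Analemma4} to conclude that $\inf_{(\bdlambda,\bds)\in\Yspace}L(\bdlambda,\bds;\bdm)$ is finite, hence there is a minimizing sequence $\{(\bdlambda^n,\bds^n)\}_{n\in\mathbb{N}}\subset\Yspace$; since $\{L(\bdlambda^n,\bds^n;\bdm)\}_n$ is convergent it is bounded, so the coercivity in \cref{Analemma4} forces $\{(\bdlambda^n,\bds^n)\}_n$ to be bounded in $\mathbb{R}^J\times\mathbb{R}^I$. Then, by Bolzano--Weierstrass in finite dimensions, pass to a subsequence $\{(\bdlambda^{n_k},\bds^{n_k})\}_k$ converging to some $(\bdlambda^{\ast},\bds^{\ast})$. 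Because $\Yspace$ is the intersection of the box $\{0\le\bdlambda\le\Lambda\}$, the closed orthant $\mathbb{R}_{+}^{I}$ and the affine hyperplane $\{\bdOneMask^{\tra}\bdlambda=N\}$ — all closed sets — $\Yspace$ is closed, so $(\bdlambda^{\ast},\bds^{\ast})\in\Yspace$. Finally, continuity of $L(\Cdot,\Cdot;\bdm)$ at the limit (here one notes that $\bdlambda^{\ast}\ge 0$, $\bds^{\ast}\ge 0$, and that the coercivity estimate of \cref{Analemma2}/\cref{Analemma4} keeps $\exp(-s_i^{n})\sum_j a_{ijt}\lambda_j^{n}$ bounded away from $0$, so $\exp(-s_i^{\ast})\sum_j a_{ijt}\lambda_j^{\ast}>0$ and the logarithmic terms stay finite) yields $L(\bdlambda^{\ast},\bds^{\ast};\bdm)=\lim_k L(\bdlambda^{n_k},\bds^{n_k};\bdm)=\inf_{(\bdlambda,\bds)\in\Yspace}L(\bdlambda,\bds;\bdm)$, so $(\bdlambda^{\ast},\bds^{\ast})$ solves \cref{eq:min_L_Y}.

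The main (and essentially only) obstacle relative to \cref{existing_theorem} is ensuring the limit remains in $\Yspace$: closedness of the box and orthant was already used there, while closedness of the hyperplane is what newly guarantees $\bdOneMask^{\tra}\bdlambda^{\ast}=N$; the potential danger that $\bdlambda^{\ast}$ drives some LOR sum to $0$ (blowing up the $\ln$) is ruled out exactly by the coercivity bound, which controls $\bds^n$ through a uniform lower bound on $\exp(-s_i^n)\sum_j a_{ijt}\lambda_j^n$. Hence the argument goes through with no extra work beyond \cref{Analemma4}.
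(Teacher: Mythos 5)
Your proposal is correct and follows essentially the same route as the paper, which simply states that the proof is identical to that of \cref{existing_theorem} (the direct method: minimizing sequence from lower boundedness, boundedness from coercivity via \cref{Analemma4}, Bolzano--Weierstrass, closedness of $\Yspace$, and continuity of $L$). Your additional remarks on the closedness of the affine constraint, the nonemptiness of $\Yspace$, and the uniform lower bound keeping the logarithmic terms finite are details the paper leaves implicit, but they do not change the argument.
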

\begin{proof}
\label{proof_existing theorem for constrained case}
Here the proof has no difference from that of \cref{existing_theorem}. Thus we omit it here.
\end{proof}

\begin{theorem}
\label{unique_theorem_constrained}
Let the assumptions in \cref{Analemma4} hold. Assume that there exists a point $(\bdlambda^{\triangleleft}, \bds^{\triangleleft})\in \Yspace$ such that
\begin{equation}\label{eq:solution_consist_2}
    m_{it} = \exp{(-s_i^{\triangleleft})}\sum_j a_{ijt}\lambda_j^{\triangleleft} \quad \forall (i, t), 
\end{equation}
and furthermore, for each LOR $i$, $A_{i} = (a_{ijt})_{J\times T}$ is full row rank. Then, the solution is unique for the problem \cref{eq:min_L_Y}, and it is that $(\bdlambda^{\triangleleft}, \bds^{\triangleleft})$. 
\end{theorem}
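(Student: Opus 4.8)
The plan is to reduce everything to the argument already carried out for \cref{unique_theorem} and then use the mask constraint to kill the one remaining degree of freedom (the global scaling). First I would note that $\Yspace \subseteq \Xspace$ and that $(\bdlambda^{\triangleleft}, \bds^{\triangleleft}) \in \Yspace$ satisfies the data-consistency equations \cref{eq:solution_consist_2}, so the reasoning in the proof of \cref{unique_theorem} transfers verbatim: the reduced functional $\tilde L(\Cdot; \bdm)$ of \cref{reduced_negetive_function} is strictly convex with unique minimizer $\tilde{\bdm}^{\diamond} = \bdm$, hence $L(\bdlambda^{\triangleleft}, \bds^{\triangleleft}; \bdm) = \tilde L(\bdm; \bdm) \le \tilde L(\bar{\bdm}; \bdm) = L(\bdlambda, \bds; \bdm)$ for every $(\bdlambda, \bds) \in \Xspace$, and a fortiori for every $(\bdlambda, \bds) \in \Yspace$. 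This shows $(\bdlambda^{\triangleleft}, \bds^{\triangleleft})$ is a solution of \cref{eq:min_L_Y}.

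Next, for uniqueness, I would take an arbitrary solution $(\bdlambda^{\star}, \bds^{\star}) \in \Yspace$ of \cref{eq:min_L_Y}. Then $L(\bdlambda^{\star}, \bds^{\star}; \bdm) = L(\bdlambda^{\triangleleft}, \bds^{\triangleleft}; \bdm) = \tilde L(\bdm; \bdm)$, so $\{\exp(-s_i^{\star})\sum_j a_{ijt}\lambda_j^{\star}\}_{i,t}$ also minimizes $\tilde L(\Cdot; \bdm)$, and by strict convexity it must equal $\bdm$; therefore $\exp(-s_i^{\star})\sum_j a_{ijt}\lambda_j^{\star} = \exp(-s_i^{\triangleleft})\sum_j a_{ijt}\lambda_j^{\triangleleft}$ for all $(i,t)$. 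Rewriting these compactly over the TOF index $t$ and invoking that each $A_i$ is full row rank, exactly as in the derivation of \cref{Anathm2equ3}, gives $\exp(-s_i^{\star})\bdlambda^{\star} = \exp(-s_i^{\triangleleft})\bdlambda^{\triangleleft}$ for every $i$, so $\bdlambda^{\star} = \exp(\gamma_i)\bdlambda^{\triangleleft}$ with $\gamma_i = s_i^{\star} - s_i^{\triangleleft}$; since the right-hand side is independent of $i$, all the $\gamma_i$ coincide with a single $\gamma \in \mathbb{R}$, whence $\bdlambda^{\star} = \exp(\gamma)\bdlambda^{\triangleleft}$ and $\bds^{\star} = \bds^{\triangleleft} + \gamma\boldsymbol{1}$.

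The new ingredient, and the only place the constraint \cref{eq_constraint} is used, is the last step: applying $\bdOneMask^{\tra}$ to $\bdlambda^{\star} = \exp(\gamma)\bdlambda^{\triangleleft}$ yields $N = \bdOneMask^{\tra}\bdlambda^{\star} = \exp(\gamma)\,\bdOneMask^{\tra}\bdlambda^{\triangleleft} = \exp(\gamma)N$. Since $N > 0$, this forces $\exp(\gamma) = 1$, i.e. $\gamma = 0$, so $\bdlambda^{\star} = \bdlambda^{\triangleleft}$ and $\bds^{\star} = \bds^{\triangleleft}$, which is the claim.

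I do not anticipate a genuine obstacle: the analytic content is already contained in \cref{unique_theorem} via the strict convexity of $\tilde L$, and what remains is a one-line normalization using the mask constraint. The only points requiring a little care are (i) that $\Yspace \ne \emptyset$ and $N > 0$, so that the cancellation $\exp(\gamma) = 1$ is legitimate — both being consistent with the standing hypothesis $\bdm > 0$ and the assumed existence of $(\bdlambda^{\triangleleft}, \bds^{\triangleleft}) \in \Yspace$ — and (ii) stating the compact rewriting over $t$ and the full-row-rank argument for the constrained minimizers precisely as in the proof of \cref{unique_theorem}. I would also remark, for completeness, that this is genuinely stronger than \cref{unique_theorem}: the mask constraint removes the scale invariance of \cref{times relation}, so here the solution is unique in the ordinary sense rather than only up to a constant.
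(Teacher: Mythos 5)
Your proposal is correct and follows essentially the same route as the paper: the paper's proof simply cites the argument of \cref{unique_theorem} to conclude that $(\bdlambda^{\triangleleft}, \bds^{\triangleleft})$ is a minimizer and that any two minimizers differ by a constant, then invokes the mask constraint \cref{eq_constraint} to fix that constant, exactly as you do in your final normalization step $\exp(\gamma)N = N$. Your write-up is just a more explicit version of the same argument, with the minor (and reasonable) added caveat that $N>0$ is needed for the cancellation.
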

\begin{proof}
From the proof of \cref{unique_theorem}, we know that $(\bdlambda^{\triangleleft}, \bds^{\triangleleft})$ is a solution to \cref{eq:min_L_Y}, and different solutions differ by only one constant. Furthermore, due to the constraint of $\bdOneMask^{\tra} \bdlambda = N$ in \cref{eq_constraint}, the solution can be determined uniquely.   
\end{proof} 

\begin{remark}\label{rem:nonuniqueness}
Note that the constraint on the total amount of the activity in some mask region in \cref{eq_constraint} is used to overcome the non-uniqueness in some  existing models \cite{defrise2012,rezaei2012,rezaei2014}, which is also a weaker condition than that in the recent work \cite{ren2024}. 
\end{remark}

Moreover, we give the analysis for the stability of the solution to the problem \cref{eq:min_L_Y}.

\subsubsection{Stability}

Following the work of \cite{burger2014}, we also consider a class of perturbations on the measured data in the form of 
\begin{equation}
\label{stability_equ1}
    \bdm^n = \bdm + \bdvsig^n \quad \text{with}\quad  \|\bdvsig^n\| \longrightarrow 0 \quad \text{as}\quad n \longrightarrow 0,
\end{equation}
where $\bdm^{n} = \{m_{it}^n\}_{i,t}$, and $\boldsymbol{\varsigma}^n = \{\varsigma_{it}^n\}_{i,t}$.
Then, the corresponding minimization problem based on the perturbed data can be stated as
\begin{align}
	\label{stability_equ2}
		\nonumber\min_{(\bdlambda, \bds) \in \Yspace} L(\bdlambda, \bds; \bdm^n) = \sum_{i,t}\Bigg\{&\exp{(-s_i)}\sum_{j=1}^{J}a_{ijt}\lambda_j \\
		&- m_{it}^n\ln\Bigg(\exp{(-s_i)}\sum_{j=1}^{J}a_{ijt}\lambda_{j}\Bigg)\Bigg\}. 
\end{align}
\begin{lemma}
\label{stability_lemma1}
Suppose that for a fixed $(\bdlambda, \bds)\in \Yspace$, $\min_{i,t}\left\{\sum_{j = 1}^J a_{ijt}\lambda_j\right\}> 0$. Then
\[
\lim_{n\to \infty}L(\bdlambda, \bds; \bdm^n) = L(\bdlambda, \bds; \bdm). 
\]
\end{lemma}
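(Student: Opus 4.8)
The plan is to exploit the fact that, with $(\bdlambda,\bds)$ held fixed, the quantities $\bar m_{it} := \exp(-s_i)\sum_{j=1}^{J}a_{ijt}\lambda_j$ do not depend on $n$ at all, so the map $\bdm\mapsto L(\bdlambda,\bds;\bdm)$ is in fact \emph{affine} in the data. Concretely, from \cref{neg_log_likelihood},
\[
L(\bdlambda,\bds;\bdm^n)-L(\bdlambda,\bds;\bdm)=-\sum_{i,t}\bigl(m_{it}^n-m_{it}\bigr)\ln\bar m_{it}=-\sum_{i,t}\varsigma_{it}^n\ln\bar m_{it},
\]
so the whole statement reduces to showing that this right-hand side tends to $0$.

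First I would check that each $\bar m_{it}$ is a strictly positive real number, so that $\ln\bar m_{it}$ is a well-defined finite constant: indeed $\exp(-s_i)>0$ always, and by hypothesis $\sum_{j=1}^{J}a_{ijt}\lambda_j\ge\min_{i,t}\{\sum_{j=1}^{J}a_{ijt}\lambda_j\}>0$ for every $(i,t)$. This is the only place the hypothesis $\min_{i,t}\{\sum_{j=1}^{J}a_{ijt}\lambda_j\}>0$ enters; it is precisely what prevents the logarithmic terms from blowing up.

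Then I would estimate $\bigl|\sum_{i,t}\varsigma_{it}^n\ln\bar m_{it}\bigr|$ by Cauchy--Schwarz: it is at most $\|\bdvsig^n\|\,\bigl(\sum_{i,t}(\ln\bar m_{it})^2\bigr)^{1/2}$, where the second factor is a fixed finite constant since $(\bdlambda,\bds)$ is fixed and the index set $\{1,\dots,I\}\times\{1,\dots,T\}$ is finite. Since $\|\bdvsig^n\|\to 0$ as $n\to\infty$ by \cref{stability_equ1}, this bound tends to $0$, which gives $\lim_{n\to\infty}L(\bdlambda,\bds;\bdm^n)=L(\bdlambda,\bds;\bdm)$.

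There is essentially no hard part: the content of the lemma is just Lipschitz continuity of $L(\bdlambda,\bds;\cdot)$ in the data, on the region where $\bar{\bdm}$ stays bounded away from $0$. The only points requiring care are to invoke the positivity hypothesis so that the logarithms are finite, and to note that the finiteness of $I$ and $T$ makes the coefficient $\bigl(\sum_{i,t}(\ln\bar m_{it})^2\bigr)^{1/2}$ a genuine constant independent of $n$; everything else is a one-line triangle/Cauchy--Schwarz estimate.
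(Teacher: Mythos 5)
Your proposal is correct and follows essentially the same route as the paper: isolate the difference $L(\bdlambda,\bds;\bdm^n)-L(\bdlambda,\bds;\bdm)=-\sum_{i,t}\varsigma_{it}^n\ln\bar m_{it}$, use the hypothesis to keep each $\ln\bar m_{it}$ finite, and bound the sum by a constant times $\|\bdvsig^n\|\to 0$. The only cosmetic difference is that you apply Cauchy--Schwarz directly while the paper uses the bound $\sum_{i,t}|\varsigma_{it}^n|\le\sqrt{IT}\,\|\bdvsig^n\|$ together with the maximum of $|\ln\bar m_{it}|$; both yield the same conclusion.
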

\begin{proof}
\label{stability_lemma_proof}
For the fixed $(\bdlambda,\bds)$, we have  
\begin{align}\label{stability_equ3}
  \nonumber | L(\bdlambda, \bds; \bdm^n) - &L(\bdlambda,\bds; \bdm)| \\
 \nonumber  &= \Bigg{|}\sum_{i,t}(m_{it}^{n} - m_{it})\ln\left(\exp{(-s_i)}\sum_{j = 1}^J a_{ijt}\lambda_j\right)\Bigg{|}\\
   \nonumber &\le\sum_{i,t}|\varsigma_{it}^{n}| \Bigg{|}\ln\left(\exp{(-s_i)}\sum_{j = 1}^J a_{ijt}\lambda_j\right)\Bigg{|}\\
   &\le \sqrt{IT}\|\bdvsig^{n}\|\max_{i,t}\left\{\Bigg{|}\ln\left(\exp{(-s_i)}\sum_{j = 1}^J a_{ijt}\lambda_j\right)\Bigg{|}\right\}.
\end{align}
Because of the $\bds\ge 0$ and $0 \le \bdlambda \le \Lambda$, and $\min_{i,t}\left\{\sum_{j = 1}^J a_{ijt}\lambda_j\right\}> 0$, it follows that for any $(i,t)$, 
\begin{align}\label{eq:bound_barm}
    \nonumber 0 < \exp{\bigl(-\max_i\{s_i\}\bigr)}\min_{i,t}\left\{\sum_{j = 1}^J a_{ijt}\lambda_j\right\} &\le \exp{(-s_i)}\sum_{j = 1}^J a_{ijt}\lambda_j \\
    &\le J\Lambda a_{max} <\infty. 
\end{align}
Using \cref{stability_equ1} and \cref{eq:bound_barm}, from \cref{stability_equ3}, we obtain 
\begin{equation*}
    |L(\bdlambda, \bds; \bdm^n) - L(\bdlambda,\bds; \bdm)| \longrightarrow 0 \quad \text{as} ~ n\longrightarrow \infty. 
\end{equation*}
The proof is complete. 
\end{proof}

\begin{remark}\label{stability_remark1}
The assumption that $\min_{i,t}\left\{\sum_{j = 1}^J a_{ijt}\lambda_j\right\} > 0 $ for some $(\bdlambda, \bds)\in \Yspace$ is reasonable since each LOR contains activity. Then it also makes sure that the logarithmic functions in \cref{neg_log_likelihood} and \cref{stability_equ2} are well-defined. 
\end{remark}

Denote  
\[
{\bdu^{\ast}}^n := ({\bdlambda^{\ast}}^n, {\bds^{\ast}}^n) \in \argmin_{(\bdlambda, \bds) \in \Yspace} L(\bdlambda, \bds; \bdm^n), 
\]
and 
\[
\bdu^{\ast} := (\bdlambda^{\ast},\bds^{\ast}) \in \argmin_{(\bdlambda, \bds) \in \Yspace} L(\bdlambda, \bds; \bdm). 
\]
\begin{theorem}
Let the assumptions in \cref{unique_theorem_constrained} hold. Assume that there exist positive constants $C_1$ and $C_2$ such that 
\begin{equation*}
    \bdm^n \ge C_1\quad\text{and}\quad\min_{i,t}\left\{\sum_{j = 1}^J a_{ijt}{\lambda_j^{\ast}}^n\right\}>C_2 \quad \forall n,
\end{equation*}
and $\min_{i,t}\left\{\sum_{j = 1}^J a_{ijt}\lambda^{\ast}_j\right\}>0$. Then  
\begin{equation*}
   \lim_{n\to\infty}{\bdu^{\ast}}^n = \bdu^{\ast}.
\end{equation*}
\end{theorem}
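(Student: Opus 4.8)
The plan is to establish the convergence of minimizers via a standard $\Gamma$-convergence-type argument, adapted to this finite-dimensional constrained setting. First I would show that the sequence $\{{\bdu^{\ast}}^n\}_{n\in\mathbb{N}}$ is bounded. The bound $\bdm^n \ge C_1 > 0$ and the argument of \cref{Analemma4} (coercivity of $L(\Cdot,\Cdot;\bdm^n)$ in $\Yspace$, with constants depending only on $C_1$, not on $n$) give that $\{{\bdu^{\ast}}^n\}$ lies in a bounded subset of $\Yspace$: indeed evaluating $L({\bdu^{\ast}}^n; \bdm^n)$ against a fixed competitor, say $\bdu^{\ast}$ itself (using \cref{stability_lemma1} to control $L(\bdu^{\ast};\bdm^n)$, hence it is bounded uniformly in $n$), bounds the objective values uniformly, and then the uniform coercivity forces $\|{\bdlambda^{\ast}}^n\| + \|{\bds^{\ast}}^n\|$ to be uniformly bounded. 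Since $\Yspace$ is closed, every subsequential limit stays in $\Yspace$.

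Next I would extract an arbitrary convergent subsequence ${\bdu^{\ast}}^{n_k} \to \bar{\bdu} = (\bar{\bdlambda}, \bar{\bds}) \in \Yspace$ and show $\bar{\bdu} = \bdu^{\ast}$. The key inequality to pass to the limit is the optimality of ${\bdu^{\ast}}^{n_k}$: for every $(\bdlambda, \bds) \in \Yspace$,
\begin{equation*}
L({\bdu^{\ast}}^{n_k}; \bdm^{n_k}) \le L(\bdlambda, \bds; \bdm^{n_k}).
\end{equation*}
On the right-hand side, \cref{stability_lemma1} gives $L(\bdlambda,\bds;\bdm^{n_k}) \to L(\bdlambda,\bds;\bdm)$ for any fixed admissible $(\bdlambda,\bds)$ with $\min_{i,t}\{\sum_j a_{ijt}\lambda_j\} > 0$. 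On the left-hand side, I need $L({\bdu^{\ast}}^{n_k}; \bdm^{n_k}) \to L(\bar{\bdu}; \bdm)$; this follows by splitting $|L({\bdu^{\ast}}^{n_k};\bdm^{n_k}) - L(\bar{\bdu};\bdm)| \le |L({\bdu^{\ast}}^{n_k};\bdm^{n_k}) - L({\bdu^{\ast}}^{n_k};\bdm)| + |L({\bdu^{\ast}}^{n_k};\bdm) - L(\bar{\bdu};\bdm)|$, where the second term vanishes by continuity of $L(\Cdot,\Cdot;\bdm)$ and the first is bounded, as in \cref{stability_equ3}, by $\sqrt{IT}\|\bdvsig^{n_k}\| \max_{i,t}|\ln(\exp(-{s_i^{\ast}}^{n_k})\sum_j a_{ijt}{\lambda_j^{\ast}}^{n_k})|$ — here the hypothesis $\min_{i,t}\{\sum_j a_{ijt}{\lambda_j^{\ast}}^n\} > C_2$ together with the uniform bound on ${\bds^{\ast}}^n$ keeps the logarithm bounded uniformly in $k$, so the whole term goes to $0$. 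Combining, $L(\bar{\bdu};\bdm) \le L(\bdlambda,\bds;\bdm)$ for every admissible competitor, so $\bar{\bdu}$ is a minimizer of $L(\Cdot,\Cdot;\bdm)$ over $\Yspace$. By \cref{unique_theorem_constrained} that minimizer is unique, hence $\bar{\bdu} = \bdu^{\ast}$. Since every convergent subsequence of the bounded sequence $\{{\bdu^{\ast}}^n\}$ has the same limit $\bdu^{\ast}$, the whole sequence converges to $\bdu^{\ast}$.

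I anticipate the main obstacle to be the liminf/continuity estimate on the left-hand side, i.e.\ controlling $L({\bdu^{\ast}}^{n_k}; \bdm^{n_k})$ along the sequence where \emph{both} the data and the argument vary. The subtlety is that $L$ involves $\ln(\exp(-s_i)\sum_j a_{ijt}\lambda_j)$, which blows up if $\sum_j a_{ijt}\lambda_j \to 0$; this is precisely why the uniform lower bound $\min_{i,t}\{\sum_j a_{ijt}{\lambda_j^{\ast}}^n\} > C_2$ is assumed, and it must be used in tandem with the uniform upper bound on ${\bds^{\ast}}^n$ (from the coercivity step) and the uniform upper bound $\sum_j a_{ijt}\lambda_j \le J\Lambda a_{max}$ to conclude the logarithm stays in a fixed compact interval. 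A secondary point worth stating carefully is that the uniform coercivity constant genuinely does not depend on $n$: re-examining the proof of \cref{Analemma2}, the only place the data enters the bound on $\bds$ is through $m_{min}$, which for $\bdm^n$ is bounded below by $C_1$, and through $\tilde C$, which is controlled once the objective values are uniformly bounded — so the chain of estimates \cref{Anaequ2}–\cref{Anaproof2_second} goes through with $n$-independent constants.
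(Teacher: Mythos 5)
Your proposal is correct and follows essentially the same route as the paper's proof: bound $L({\bdu^{\ast}}^n;\bdm^n)$ by comparison with the fixed competitor $\bdu^{\ast}$, invoke the (uniform-in-$n$) coercivity from \cref{Analemma2} to bound the minimizers, extract a convergent subsequence, pass to the limit via the splitting $|L(\Cdot;\bdm^{n_k})-L(\Cdot;\bdm)|$ controlled as in \cref{stability_equ3} using $C_2$ and the uniform bounds on ${\bds^{\ast}}^{n}$, and identify the limit with the unique minimizer from \cref{unique_theorem_constrained}. Your explicit remark that the coercivity constants depend on the data only through $m_{min}\ge C_1$ (so the chain \cref{Anaequ2}--\cref{Anaproof2_second} is uniform in $n$) is a point the paper leaves implicit, and is worth stating.
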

\begin{proof}
Since ${\bdu^{\ast}}^n$ is a minimizer of $L(\bdlambda, \bds; \bdm^n)$, it follows that
\begin{equation}\label{eq:u_star_n_u}
     L({\bdu^{\ast}}^n; \bdm^n)\le L(\bdu^{\ast}; \bdm^n). 
\end{equation}
Using \cref{eq:u_star_n_u}, and the assumption $\min_{i,t}\left\{\sum_{j = 1}^J a_{ijt}\lambda^{\ast}_j\right\}>0$, by \cref{stability_lemma1}, we have 
\begin{equation}\label{eq:L_bounded}
    \limsup_{n\to\infty}L({\bdu^{\ast}}^n; \bdm^n)\le\limsup_{n\to\infty}L(\bdu^{\ast}; \bdm^n) = L(\bdu^{\ast}; \bdm).
\end{equation}
By \cref{eq:L_bounded}, we know that the sequence $\{L({\bdu^{\ast}}^n; \bdm^n)\}_{n\in\mathbb{N}}$ is bounded. From the proof of \cref{Analemma2}, it is known that $\{{\bdu^{\ast}}^n\}_{n\in\mathbb{N}}$ is also bounded. Therefore, there exists a subsequence $\{{\bdu^{\ast}}^{n_k}\}_{k\in\mathbb{N}}$ and $\tilde{\bdu}^*\in Y$ such that
\begin{equation*}
    \lim_{k\to\infty} {\bdu^{\ast}}^{n_k} = \tilde{\bdu}^*. 
\end{equation*}
Using the continuity of $L(\Cdot,\Cdot; \bdm)$, \cref{stability_equ3}, and \cref{eq:L_bounded}, we obtain 
\begin{align}
   \nonumber L(\tilde{\bdu}^*; \bdm) &= \lim_{k\to\infty}L({\bdu^{\ast}}^{n_k}; \bdm) = \limsup_{k\to\infty}L({\bdu^{\ast}}^{n_k}; \bdm)\\
   \nonumber &\le \limsup_{k\to\infty}\bigl(L({\bdu^{\ast}}^{n_k}; \bdm)-L({\bdu^{\ast}}^{n_k}; \bdm^{n_k})\bigr) + \limsup_{k\to\infty}L({\bdu^{\ast}}^{n_k}; \bdm^{n_k}\})\\
  \nonumber \label{eq:L_v_L_u} &\le \limsup_{k\to\infty}\sqrt{IT}\|\bdvsig^{n_k}\|\max_{i,t}\left\{\Bigg{|}\ln\left(\exp{\bigl(-{s_i^{\ast}}^{n_k}\bigr)} \sum_{j=1}^J a_{ijt} {\lambda_j^{\ast}}^{n_k} \right)\Bigg{|}\right\} \\
   &\hspace{4mm}+L(\bdu^{\ast}; \bdm).
\end{align}
Combining the boundedness of $\{{\bdu^{\ast}}^n\}_{n\in\mathbb{N}}$ and the assumption on \\ $\min_{i,t}\left\{\sum_{j = 1}^J a_{ijt}{\lambda_j^{\ast}}^n\right\}$ for all $n$, we have 
\begin{equation}\label{eq:u_n_m_bounded}
0<C_3\le \exp{\bigl(-{s_i^{\ast}}^n\bigr)} \sum_{j=1}^J a_{ijt}{\lambda_j^{\ast}}^n \le J\Lambda a_{max} \quad \forall (i,t)~\text{and}~\forall n \in \mathbb{N}.
\end{equation}
With \cref{stability_equ1} and \cref{eq:u_n_m_bounded}, it is easy to obtain that the first term on the right side of \cref{eq:L_v_L_u} is vanishing. 
Thus, we have $L(\tilde{\bdu}^*; \bdm) \le L(\bdu^*; \bdm)$ immediately. Inversely, since $\bdu^*$ is the minimizer of $L(\Cdot, \Cdot; \bdm)$, it follows that $L(\tilde{\bdu}^*; \bdm) \ge L(\bdu^*; \bdm)$. Then 
\begin{equation*}
    L(\tilde{\bdu}^*; \bdm) = L(\bdu^*; \bdm).
\end{equation*}
By \cref{unique_theorem_constrained}, we know
that $\tilde{\bdu}^* = \bdu^*$. Finally, we have 
\begin{equation*}
   \lim\limits_{n\to\infty}{\bdu^{\ast}}^n=\bdu^*. 
\end{equation*}
The proof is complete. 
\end{proof}

\section{Computational Method}
\label{sec:method}

To solve the proposed model \cref{equ1}-\cref{eq_constraint}, we develop an intertwined update algorithm in \cref{alg:intertwined_alg}. 
\begin{algorithm}[htbp]
\caption{The intertwined update algorithm to solve the proposed model \cref{equ1}-\cref{eq_constraint}.}
\label{alg:intertwined_alg}
\begin{algorithmic}[1]
\STATE \emph{Initialize}: Given the measured data $\bdm$, and initialized the activity image $\bdlambda^0$ and the attenuation sinogram $\bds^0$. Let $k \gets 0$. 

\STATE \emph{Loop}:
\STATE\quad Solve the following optimization problem with the initialization $\bdlambda^k$: 
\begin{equation}
\label{methodequ1}
    \bdlambda^{k+1} = \argmin_{\bdlambda \in  \Yspace_1} L(\bdlambda,\bds^{k}; \bdm), 
\end{equation}
\newline \indent \hspace{2mm}  where $\Yspace_1 = \{\bdlambda \in \mathbb{R}^J | 0 \le \bdlambda \le \Lambda, \bdOneMask^{\tra}\bdlambda = N\}$. 

\STATE \quad Solve the following optimization problem with the initialization $\bds^k$: 
\begin{equation}
\label{methodequ2}
    \bds^{k+1} = \argmin_{\bds \in \Yspace_2}L(\bdlambda^{k+1},\bds; \bdm), 
\end{equation}
\newline \indent \hspace{2mm}  where $\Yspace_2 = \{\bds \in \mathbb{R}^I | \bds \ge 0\}$. 

\STATE \quad If some given termination condition is satisfied, \textbf{output} $\bdlambda^{k+1}$ and $\bds^{k+1}$; 
\newline \indent \hspace{2mm} Otherwise, let $k \gets k+1$, \textbf{goto} \emph{Loop}.
\end{algorithmic}
\end{algorithm} 

\subsection{The iterative scheme}
Based on \cref{alg:intertwined_alg}, the iterative scheme that we devise is given by
\begin{align}
	\label{equ2_general}
	\lambda_{j}^{k+1} &= \frac{\lambda_{j}^{k}}{\sum_{i,t}a_{ijt}\exp(-s^k_i)}\sum_{i,t}\left\{a_{ijt}\left(\frac{m_{it}}{\sum_{l}a_{ilt}\lambda_{l}^{k}}\right)\right\} \quad\forall~j,\\
	\label{equ3_general}
	\bdlambda^{k+1} &\longleftarrow \bdlambda^{k+1}\left(\frac{N}{\bdOneMask^{\tra}\bdlambda^{k+1}}\right),\\
	\label{equ4_general}
	s_{i}^{k+1} &= \proj\left[\ln\left(\frac{\sum_{j,t}a_{ijt}\lambda_{j}^{k+1}}{\sum_{t}m_{it}}\right)\right]\quad\forall~ i,
\end{align} 
where $\proj[\,\cdot\,]$ denotes the projection operator onto the set of nonnegative real numbers, and the scheme begins with the initial selection of $\lambda_{j}^{0}>0$ for all $j$ and $s_{i}^{0}\ge 0$ for all $i$. 

Since the subproblem \cref{methodequ1} has no closed-form solution, the first two steps of \cref{equ2_general} and \cref{equ3_general} are used to solve it iteratively. In contrast, the last step of \cref{equ4_general} can be used to solve the subproblem \cref{methodequ2} accurately. The detailed derivation for the iterative scheme is provided in \cref{sec:appendixA}.

\subsection{Convergence analysis}\label{convergence_analysis}

Here we analyze the convergence of the proposed \cref{alg:intertwined_alg}. Firstly, we give the following result on the solvability of the subproblems of \cref{methodequ1} and \cref{methodequ2}. 

\begin{lemma}
\label{prop:solvability}
Assume that the measured data $\bdm > 0$. The problems in \cref{methodequ1} and \cref{methodequ2} are solvable. 
Furthermore, if $A_{i} = (a_{ijt})_{J\times T}$ is full row rank for each LOR $i$, they are uniquely solvable. 
\end{lemma}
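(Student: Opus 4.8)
The plan is to treat the two subproblems separately, since they have very different structure. For the $\bdlambda$-subproblem \cref{methodequ1}, I would first observe that for fixed $\bds^k \ge 0$ the map $\bdlambda \mapsto L(\bdlambda, \bds^k; \bdm)$ is, up to the additive constant $\sum_{i,t}\exp(-s_i^k)(\cdots)$-terms, a sum of terms $c_{it}\sum_j a_{ijt}\lambda_j - m_{it}\ln(\sum_j a_{ijt}\lambda_j)$ with $c_{it} = \exp(-s_i^k) > 0$; this is convex in $\bdlambda$ (linear term plus composition of the convex $-\ln$ with an affine map), continuous on the relative interior of $\Yspace_1$, and $\Yspace_1$ is a nonempty (assuming $N$ is chosen consistently with the box, which is implicit) compact convex polytope. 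The only subtlety is that $L(\cdot,\bds^k;\bdm)$ is $+\infty$ on the part of $\Yspace_1$ where some $\sum_j a_{ijt}\lambda_j = 0$; so I would argue that $L$ restricted to $\Yspace_1$ is a proper, lower-semicontinuous function on a compact set, hence attains its infimum — equivalently, take a minimizing sequence, extract a convergent subsequence by compactness of $\Yspace_1$, and use the fact that the KL-rewriting from the proof of \cref{Analemma1} together with \cref{Analemma2}-type reasoning forces the limit to lie in the region where $\sum_j a_{ijt}\lambda_j > 0$ (otherwise $L \to +\infty$, contradicting that the sequence is minimizing and $L$ is finite somewhere on $\Yspace_1$), where $L$ is continuous. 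This gives existence.

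For the $\bds$-subproblem \cref{methodequ2}, with $\bdlambda^{k+1}$ fixed, I would exploit that $L$ separates completely across the index $i$: writing $b_i := \sum_{j,t} a_{ijt}\lambda_j^{k+1} > 0$ (positive because $\bdlambda^{k+1}$ has positive entries and $a_{max}>0$) and $c_i := \sum_t m_{it} > 0$, the $i$-th block of the objective is $\phi_i(s_i) = \exp(-s_i) b_i + c_i s_i + \text{const}$, a strictly convex, coercive function of the scalar $s_i$ on $[0,\infty)$ (as $s_i \to \infty$ the linear term dominates; at $s_i = 0$ it is finite). Hence each $\phi_i$ attains a unique minimum on $[0,\infty)$, namely at $\proj[\ln(b_i/c_i)]$, and summing over $i$ shows \cref{methodequ2} has a unique solution — this recovers exactly \cref{equ4_general}.

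For the uniqueness claim under the full-row-rank hypothesis, the issue is only the $\bdlambda$-subproblem. Here I would argue that the objective, viewed through the substitution $\tilde m_{it} = \exp(-s_i^k)\sum_j a_{ijt}\lambda_j$, inherits strict convexity along any direction $\bdv$ with $A_i^{\tra}\bdv \ne 0$ for some $i$: since $A_i$ has full row rank (i.e.\ the rows, indexed by $t$, span $\mathbb{R}^T$ — here one needs $J \ge T$ and $A_i^{\tra}$ injective, which is what "full row rank of $A_i=(a_{ijt})_{J\times T}$" should mean in this convention), the only direction along which every $\sum_j a_{ijt}v_j$ vanishes for all $(i,t)$ is $\bdv = 0$. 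Combined with the affine constraint $\bdOneMask^{\tra}\bdlambda = N$, a standard argument — if $\bdlambda^a \ne \bdlambda^b$ were two minimizers, the midpoint would have strictly smaller value unless $A_i^{\tra}(\bdlambda^a - \bdlambda^b) = 0$ for all $i$, forcing $\bdlambda^a = \bdlambda^b$ — gives uniqueness.

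The main obstacle I anticipate is the bookkeeping around the effective domain of $L$ on $\Yspace_1$: one must rule out that a minimizing sequence drifts to the boundary face where some line-of-response integral $\sum_j a_{ijt}\lambda_j$ degenerates to zero. The clean way is to note $\Yspace_1$ is compact and $L(\cdot,\bds^k;\bdm)$ is lower semicontinuous and proper there (finite at, e.g., any strictly positive feasible $\bdlambda$, by the assumption that each LOR contains activity), so Weierstrass applies directly; no coercivity argument is even needed because the feasible set is already bounded. The rank hypothesis is genuinely necessary only for uniqueness, exactly paralleling \cref{unique_theorem_constrained}.
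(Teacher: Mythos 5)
Your proposal is correct, and for the existence half it is actually more careful than the paper's own argument. The paper proceeds by computing the Hessians of $L(\cdot,\bds^k;\bdm)$ and $L(\bdlambda^{k+1},\cdot\,;\bdm)$, observing they are positive semidefinite, and concluding ``convex, hence solvable''; convexity alone does not yield attainment of the infimum, so your route — Weierstrass on the compact polytope $\Yspace_1$ for the $\bdlambda$-subproblem (treating $L$ as a proper, lower-semicontinuous extended-real-valued function to handle the faces where some $\sum_j a_{ijt}\lambda_j=0$), and per-coordinate coercivity of $\phi_i(s_i)=\exp(-s_i)b_i+c_is_i$ on the unbounded set $\Yspace_2$ for the $\bds$-subproblem — supplies exactly the missing compactness/coercivity ingredient. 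Your uniqueness argument is in substance the same as the paper's: strict convexity along any direction $\bdv$ with $A_i^{\tra}\bdv\neq 0$, combined with injectivity of $A_i^{\tra}$ under the full-row-rank hypothesis, is the directional restatement of the paper's claim that the Hessian becomes positive definite; you additionally (and correctly) note that the $\bds$-subproblem is uniquely solvable as soon as $b_i>0$, without the rank hypothesis. One small slip: with the paper's convention $A_i=(a_{ijt})_{J\times T}$ the rows are indexed by $j$, so full row rank means $\mathrm{rank}\,A_i=J$ and requires $T\ge J$, not $J\ge T$ as you wrote — but the operative consequence you use, injectivity of $A_i^{\tra}$ on $\mathbb{R}^J$, is the right one, so the argument stands.
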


\begin{proof}
By \cref{neg_log_likelihood}, fixing $\bds =\bds^k$, we derive that 
\begin{equation*}
    \frac{\partial^{2} L(\bdlambda,\bds^{k}; \bdm)}{\partial\lambda_{j}\partial\lambda_{l}} = \sum_{i,t}m_{it}\frac{a_{ijt}a_{ilt}}{(\sum_{\xi}a_{i\xi t}\lambda_{\xi})^2}.
\end{equation*}
For any $0\neq\bz\in\mathbb{R}^{J}$, it follows that
\begin{equation*}
    \begin{aligned}
       \bz^{\tra}\left(\frac{\partial^{2} L(\bdlambda,\bds^{k}; \bdm)}{\partial\lambda_{j}\partial\lambda_{l}}\right)_{J\times J}\bz &= \sum_{j}\sum_{l}z_{j}\sum_{i,t}m_{it}\frac{a_{ijt}a_{ilt}}{(\sum_{\xi}a_{i\xi t}\lambda_{\xi})^2}z_{l}\\
       & = \sum_{i,t}m_{it}\left(\frac{\sum_{j}a_{ijt}z_{j}}{\sum_{\xi}a_{i\xi t}\lambda_{\xi}}\right)^{2} \ge 0. 
    \end{aligned}
\end{equation*}
Hence, the problem in \cref{methodequ1} is convex, which is solvable. 

In addition, fixing $\bdlambda =\bdlambda^{k+1}$, we have that 
\begin{align}
    \label{convergence_2nd derivative}
    \frac{\partial^{2} L(\bdlambda^{k+1},\bds; \bdm)}{\partial s_{i}\partial s_{l}} = 
    \left\{
    \begin{aligned}
        &\sum_{t}\left\{\exp{(-s_i)}\sum_{j} a_{ijt}\lambda_{j}^{k+1} \right\}\ge 0, \quad i = l,\\[2mm]
        & 0 ,\quad i \neq l. 
    \end{aligned}\right.
\end{align}
So the problem in \cref{methodequ2} is convex, which is also solvable. 

Moreover, if $A_{i} = (a_{ijt})_{J\times T}$ is full row rank for each LOR $i$, the corresponding Hessian matrix for each problem is positive definite. Hence, the uniqueness of the solution is achieved. 
\end{proof}

From the proof of \cref{prop:solvability}, the optimization problems in \cref{methodequ1} and \cref{methodequ2} are convex. Hence, \cref{alg:intertwined_alg} is referred to as the alternate convex search (ACS) \cite{gorski2007}, and also a special case of the block-relaxation methods \cite{de1994}.

\begin{theorem}
\label{convergence_thm1}
Let the assumption in \cref{Analemma4} hold, and $\{\bdu^{k} = (\bdlambda^{k},\bds^{k})\}_{k\in\mathbb{N}}$ be a sequence generated by \cref{alg:intertwined_alg}. Then $\{L(\bdu^{k}; \bdm)\}_{k\in\mathbb{N}}$ is monotonically convergent. Furthermore, let $A_{i} = (a_{ijt})_{J\times T}$ be full row rank for each LOR $i$. If $\bdu^{k+1} \neq \bdu^{k}$, then 
\begin{equation*}
    L(\bdu^{k+1}; \bdm)<L(\bdu^{k}; \bdm). 
\end{equation*}
\end{theorem}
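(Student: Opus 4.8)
The plan is to exploit the structure of \cref{alg:intertwined_alg} as an alternate convex search, where each half-step is a genuine minimization over a block. First I would establish monotonicity: by construction, $\bdlambda^{k+1}$ minimizes $L(\Cdot, \bds^k; \bdm)$ over $\Yspace_1$, so in particular $L(\bdlambda^{k+1}, \bds^k; \bdm) \le L(\bdlambda^k, \bds^k; \bdm) = L(\bdu^k; \bdm)$, since $\bdlambda^k$ is feasible for \cref{methodequ1}. Likewise $\bds^{k+1}$ minimizes $L(\bdlambda^{k+1}, \Cdot; \bdm)$ over $\Yspace_2$, and $\bds^k$ is feasible, so $L(\bdu^{k+1}; \bdm) = L(\bdlambda^{k+1}, \bds^{k+1}; \bdm) \le L(\bdlambda^{k+1}, \bds^k; \bdm)$. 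Chaining these gives $L(\bdu^{k+1}; \bdm) \le L(\bdu^k; \bdm)$, so the sequence $\{L(\bdu^k; \bdm)\}_{k\in\mathbb{N}}$ is nonincreasing. One needs the algorithm to be well-defined, i.e.\ the subproblems to be solvable: this is exactly \cref{prop:solvability}, which requires $\bdm > 0$; I would also note that $\bds^k \ge 0$ and $0 \le \bdlambda^k \le \Lambda$ are maintained inductively so that the objective stays finite along the iteration, and that \cref{Analemma4} guarantees $L$ is lower bounded on $\Yspace$. A bounded-below nonincreasing real sequence converges, which yields the first claim.

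For the strict-decrease claim under the full-row-rank hypothesis, I would argue by contraposition: suppose $L(\bdu^{k+1}; \bdm) = L(\bdu^k; \bdm)$ and show $\bdu^{k+1} = \bdu^k$. Because of the sandwich
\[
L(\bdu^k; \bdm) \ge L(\bdlambda^{k+1}, \bds^k; \bdm) \ge L(\bdu^{k+1}; \bdm),
\]
equality of the two ends forces $L(\bdlambda^{k+1}, \bds^k; \bdm) = L(\bdlambda^k, \bds^k; \bdm)$ and $L(\bdlambda^{k+1}, \bds^{k+1}; \bdm) = L(\bdlambda^{k+1}, \bds^k; \bdm)$. Now invoke \cref{prop:solvability}: under the full-row-rank assumption the Hessian of $L(\Cdot, \bds^k; \bdm)$ on the affine feasible set $\Yspace_1$ is positive definite, so $L(\Cdot, \bds^k; \bdm)$ is strictly convex there and has a unique minimizer; since both $\bdlambda^k$ (by feasibility and equal objective value) and $\bdlambda^{k+1}$ attain the minimum, $\bdlambda^{k+1} = \bdlambda^k$. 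Similarly, $L(\bdlambda^{k+1}, \Cdot; \bdm)$ is strictly convex on $\Yspace_2$ with unique minimizer $\bds^{k+1}$, and $\bds^k$ attains the same value, hence $\bds^k = \bds^{k+1}$. Therefore $\bdu^{k+1} = \bdu^k$, which is the contrapositive of the desired implication.

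The main subtlety — and the step I would be most careful about — is the strict-convexity argument on $\Yspace_1$: the Hessian computed in \cref{prop:solvability} is positive semidefinite in general, and positive definiteness of the full $J\times J$ Hessian from the full-row-rank condition gives strict convexity of $L(\Cdot, \bds^k; \bdm)$ on all of $\mathbb{R}^J$ (restricted to where it is finite), which then restricts to strict convexity on the convex set $\Yspace_1$; this makes the minimizer unique and lets the "equal value $\Rightarrow$ equal point" deduction go through cleanly. I would also make sure that the value $L(\bdlambda^k, \bds^k; \bdm)$ is finite so that "attaining the minimum with equal value" is meaningful — guaranteed since $\bdlambda^k$ has strictly positive relevant line integrals along the iteration (each LOR contains activity, cf.\ \cref{stability_remark1}) and $\bds^k \ge 0$. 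Everything else is a routine chaining of inequalities.
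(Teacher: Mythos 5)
Your proof is correct and follows essentially the same route as the paper: the monotonicity comes from chaining the two block-minimization inequalities together with the lower bound from \cref{Analemma4}, and the strict decrease comes from the unique solvability of the subproblems under the full-row-rank hypothesis (the paper states this step as a proof by contradiction and leaves the details to the reader, whereas you spell out the "equality forces both intermediate inequalities to be equalities, hence both minimizers coincide with the previous iterates" argument, which is exactly the intended content). Your added care about positive definiteness of the block Hessians and finiteness of the objective along the iterates is consistent with \cref{prop:solvability} and does not change the argument.
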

\begin{proof}
\label{convergence_proof1}
By \cref{prop:solvability}, the problems in \cref{methodequ1} and \cref{methodequ2} are both solvable, which means that 
\begin{equation*}
    L(\bdlambda^{k},\bds^{k}; \bdm)\ge L(\bdlambda^{k+1},\bds^{k}; \bdm)\ge L(\bdlambda^{k+1},\bds^{k+1}; \bdm),
\end{equation*}
namely, $L(\bdu^{k}; \bdm)\ge L(\bdu^{k+1}; \bdm)$. So $\{L(\bdu^{k}; \bdm)\}_{k\in\mathbb{N}}$ is a monotonically decreasing sequence. By \cref{Analemma4}, the sequence $\{L(\bdu^{k}; \bdm)\}_{k\in\mathbb{N}}$ is lower bounded. Hence, the sequence is convergent. 

Moreover, if $A_{i} = (a_{ijt})_{J\times T}$ is full row rank for each LOR $i$, by \cref{prop:solvability}, the problems in \cref{methodequ1} and \cref{methodequ2} are uniquely solvable. Using the method of proof by contradiction, it is easy to obtain the desired result.  
\end{proof}

To proceed, we introduce some required definitions (see \cite{bazaraa2006,gorski2007,xu2013}). 
\begin{definition}
\label{convergence_def1}
Let $\Omega\subseteq X\times Y\subseteq \mathbb{R}^{m}\times\mathbb{R}^{r}$ be a non-empty set, and $h: \Omega\to \mathbb{R}$ be a given function. We call $(\bx^{*}, \by^{*}) \in \Omega$ a Nash equilibrium point if 
\begin{equation*}
    \begin{aligned}
       h(\bx^{*},\by^{*})\le h(\bx,\by^{*}) \quad\forall\bx\in \Omega_{\by^{*}},\\
       h(\bx^{*},\by^{*})\le h(\bx^{*},\by) \quad\forall\by\in \Omega_{\bx^{*}},
    \end{aligned}
\end{equation*}
where $\Omega_{\by^{*}} = \{\bx\in X | (\bx,\by^{*})\in \Omega\}$, and $\Omega_{\bx^{*}} = \{\by\in Y | (\bx^{*},\by)\in \Omega\}$.
\end{definition}


\begin{definition}
\label{convergence_def2}
Let $\Omega\subseteq X\times Y\subseteq \mathbb{R}^{m}\times\mathbb{R}^{r}$ be a non-empty set, and $h: \Omega\to \mathbb{R}$ be a given function. Let $\bdu_{q}=(\bx_{q},\by_{q})\in \Omega$ with $q=1,2$. The $G$ is a set-value map from $\Omega$ onto its power set $\mathcal{P}(\Omega)$, which is defined by $\bdu_{2}\in G(\bdu_{1})$ if and only if
\begin{equation*}
    \begin{aligned}
       h(\bx_2, \by_{1})\le h(\bx,\by_{1}) \quad\forall\bx\in \Omega_{\by_{1}},\\
       h(\bx_{2}, \by_{2})\le h(\bx_{2},\by) \quad\forall\by\in \Omega_{\bx_{2}}.
    \end{aligned}
\end{equation*}
Then we call $G$ the algorithmic map of the ACS algorithm.
\end{definition}

Next, we provide a necessary condition for the convergence of the sequence yielded by \cref{alg:intertwined_alg}.
\begin{theorem}
\label{convergence_lemma2}
Assume that $\{\bdu^{k}\}_{k\in\mathbb{N}}$ is a sequence generated by \cref{alg:intertwined_alg} and convergent to $\bdu^*$. The $\bdu^*$ is a Nash equilibrium point.
\end{theorem}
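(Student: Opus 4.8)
The plan is to use nothing more than the defining optimality of the two block updates together with the continuity of $L(\Cdot,\Cdot;\bdm)$ at the limit point. First I would record the two inequalities that hold at every iteration of \cref{alg:intertwined_alg}: since $\bdlambda^{k+1}$ solves \cref{methodequ1} and $\bds^{k+1}$ solves \cref{methodequ2},
\[
L(\bdlambda^{k+1},\bds^{k};\bdm)\le L(\bdlambda,\bds^{k};\bdm)\quad\forall\,\bdlambda\in\Yspace_1,
\]
\[
L(\bdlambda^{k+1},\bds^{k+1};\bdm)\le L(\bdlambda^{k+1},\bds;\bdm)\quad\forall\,\bds\in\Yspace_2.
\]
Since $\bdu^{k}\to\bdu^{\ast}=(\bdlambda^{\ast},\bds^{\ast})$ in $\Yspace$, we also have $\bdlambda^{k+1}\to\bdlambda^{\ast}$ and $\bds^{k}\to\bds^{\ast}$, and the idea is simply to let $k\to\infty$ in the two displays above.

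Before that, I would verify that $L(\Cdot,\Cdot;\bdm)$ is finite and continuous at $\bdu^{\ast}$. By \cref{convergence_thm1} the sequence $\{L(\bdu^{k};\bdm)\}_{k\in\mathbb{N}}$ is monotonically decreasing and convergent, hence bounded; since $L(\Cdot,\Cdot;\bdm)$, extended by the value $+\infty$ wherever $\exp(-s_i)\sum_j a_{ijt}\lambda_j=0$, is lower semicontinuous and bounded below (\cref{Analemma1}), this gives $L(\bdu^{\ast};\bdm)\le\lim_k L(\bdu^{k};\bdm)<\infty$. Consequently $\exp(-s_i^{\ast})\sum_j a_{ijt}\lambda_j^{\ast}>0$ for every $(i,t)$, so on a neighbourhood of $\bdu^{\ast}$ all these "mean-data" quantities stay bounded away from $0$ and $\infty$ and $L(\Cdot,\Cdot;\bdm)$ is genuinely continuous there. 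In particular $L(\bdlambda^{k+1},\bds^{k};\bdm)\to L(\bdlambda^{\ast},\bds^{\ast};\bdm)$ and $L(\bdu^{k+1};\bdm)\to L(\bdu^{\ast};\bdm)$.

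Then I would fix an arbitrary $\bdlambda\in\Yspace_1$ and an arbitrary $\bds\in\Yspace_2$ and pass to the limit. The right-hand sides converge because $\bds\mapsto L(\bdlambda,\bds;\bdm)$ is continuous in $\bds$ whenever $\sum_j a_{ijt}\lambda_j>0$ for all $(i,t)$ (and the first inequality is trivial otherwise, its right-hand side being $+\infty$), and $\bdlambda\mapsto L(\bdlambda,\bds;\bdm)$ is continuous near $\bdlambda^{\ast}$ since $\sum_j a_{ijt}\lambda_j^{\ast}>0$; hence $L(\bdlambda,\bds^{k};\bdm)\to L(\bdlambda,\bds^{\ast};\bdm)$ and $L(\bdlambda^{k+1},\bds;\bdm)\to L(\bdlambda^{\ast},\bds;\bdm)$. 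Combined with the convergence of the left-hand sides from the previous step, this yields
\[
L(\bdlambda^{\ast},\bds^{\ast};\bdm)\le L(\bdlambda,\bds^{\ast};\bdm)\ \ \forall\,\bdlambda\in\Yspace_1,\qquad
L(\bdlambda^{\ast},\bds^{\ast};\bdm)\le L(\bdlambda^{\ast},\bds;\bdm)\ \ \forall\,\bds\in\Yspace_2.
\]
Since for $\Omega=\Yspace$ one has $\Omega_{\bds^{\ast}}=\Yspace_1$ and $\Omega_{\bdlambda^{\ast}}=\Yspace_2$, these are exactly the two inequalities of \cref{convergence_def1}, so $\bdu^{\ast}$ is a Nash equilibrium point.

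The main obstacle is not the limiting argument, which is routine, but ensuring that the limit point does not sit on the "bad" part of the boundary of $\Yspace$ where some $\sum_j a_{ijt}\lambda_j$ vanishes and $L$ jumps to $+\infty$; this is precisely where the monotonicity of $\{L(\bdu^{k};\bdm)\}$ from \cref{convergence_thm1} together with lower semicontinuity is used, to force $L(\bdu^{\ast};\bdm)<\infty$ and hence continuity of $L$ around $\bdu^{\ast}$. A minor but worth-stating point is the identification $\Omega_{\bds^{\ast}}=\Yspace_1$ and $\Omega_{\bdlambda^{\ast}}=\Yspace_2$, which is what makes the per-iteration optimality inequalities line up exactly with the Nash conditions after passing to the limit.
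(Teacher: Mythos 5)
Your proof is correct, and it is in substance the argument the paper outsources: the paper's own proof is a two-line remark (``the feasible domain is closed and the objective is continuous'') followed by a citation to Theorem~4.7 of \cite{gorski2007}, whose proof is precisely your limiting argument applied to the per-iteration optimality inequalities of the ACS scheme. What you do differently is make that argument self-contained, and in one respect you are more careful than the paper: $L(\Cdot,\Cdot;\bdm)$ is \emph{not} continuous on all of $\Yspace$, since $-m_{it}\ln\bigl(\exp(-s_i)\sum_j a_{ijt}\lambda_j\bigr)$ blows up where $\sum_j a_{ijt}\lambda_j=0$, so the paper's blanket continuity claim needs exactly the justification you supply --- using the monotone boundedness of $\{L(\bdu^k;\bdm)\}$ from \cref{convergence_thm1} together with lower semicontinuity of the extended function to force $L(\bdu^*;\bdm)<\infty$, hence $\exp(-s_i^*)\sum_j a_{ijt}\lambda_j^*>0$ for all $(i,t)$ and genuine continuity of $L$ near $\bdu^*$. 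Your identification $\Omega_{\bds^*}=\Yspace_1$, $\Omega_{\bdlambda^*}=\Yspace_2$ is also correct since $\Yspace=\Yspace_1\times\Yspace_2$, so the limit inequalities are exactly the conditions of \cref{convergence_def1}. The net effect is a more elementary and more rigorous proof than the citation-based one; the paper's route buys brevity at the cost of sweeping the boundary behaviour of $L$ under the rug.
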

\begin{proof}
The feasible domain of the proposed model \cref{equ1}-\cref{eq_constraint} is closed, and the objective function is continuous. By the proof of theorem 4.7 in \cite{gorski2007}, the desired result is obtained. 
\end{proof}

\begin{theorem}
\label{convergence_thm2}
Let the assumption in \cref{Analemma4} hold, and $\{\bdu^{k}\}_{k\in\mathbb{N}}$ be a sequence generated by \cref{alg:intertwined_alg}. The sequence has at least one accumulation point. Furthermore, if $A_{i} = (a_{ijt})_{J\times T}$ is full row rank for each LOR $i$, then all the accumulation points are Nash equilibrium points and have the same function value, and 
\[
\lim_{k\to\infty}\|\bdu^{k+1} - \bdu^{k}\| = 0. 
\]
Moreover, all of the points form a connected, compact set, and those located in the interior of the feasible domain are stationary points of the function $L(\Cdot, \Cdot; \bdm)$.
\end{theorem}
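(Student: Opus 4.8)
The plan is to establish the four assertions of Theorem~\ref{convergence_thm2} in sequence, each leaning on results already in hand. \textbf{Existence of an accumulation point.} First I would invoke Theorem~\ref{convergence_thm1}: the sequence $\{L(\bdu^k;\bdm)\}_k$ is monotonically decreasing, hence bounded above by $L(\bdu^0;\bdm)$, and by Lemma~\ref{Analemma4} the function is coercive on $\Yspace$. Therefore $\{\bdu^k\}$ lies in a sublevel set $\{(\bdlambda,\bds)\in\Yspace \mid L(\bdlambda,\bds;\bdm)\le L(\bdu^0;\bdm)\}$, which is bounded; since $\Yspace$ is closed, this set is compact, and Bolzano--Weierstrass yields at least one accumulation point.

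\textbf{Accumulation points are Nash equilibria with common value.} Assuming each $A_i$ is full row rank, I would argue as follows. Let $\bar\bdu$ be any accumulation point, the limit of a subsequence $\bdu^{k_\ell}$. Because $\{L(\bdu^k;\bdm)\}_k$ converges (to some $L^\ast$) and $L(\Cdot,\Cdot;\bdm)$ is continuous, every accumulation point has the same function value $L^\ast$; this already settles the ``same function value'' claim. For the Nash property I would appeal to Theorem~\ref{convergence_lemma2} (itself a consequence of Gorski et al.\ \cite{gorski2007}) applied along the convergent subsequence, using that the subproblem solutions in \cref{methodequ1}--\cref{methodequ2} are unique under the full-row-rank hypothesis so the algorithmic map $G$ of Definition~\ref{convergence_def2} is well behaved (closed-valued / upper semicontinuous on the compact feasible set). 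The standard ACS convergence machinery (Zangwill-type closed-map argument) then forces $\bar\bdu\in G(\bar\bdu)$, which is exactly the Nash equilibrium condition of Definition~\ref{convergence_def1}.

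\textbf{Asymptotic regularity $\|\bdu^{k+1}-\bdu^k\|\to 0$.} Here is where the uniqueness of the block minimizers is essential and where I expect the main technical obstacle. The idea: if this did not hold, there would be a subsequence along which $\|\bdu^{k+1}-\bdu^k\|\ge\varepsilon>0$; passing to a further subsequence, $\bdu^{k}\to\bdu'$ and $\bdu^{k+1}\to\bdu''$ with $\bdu'\neq\bdu''$ but $L(\bdu';\bdm)=L(\bdu'';\bdm)=L^\ast$. Because the $\bdlambda$-update solves \cref{methodequ1} exactly and the $\bds$-update solves \cref{methodequ2} exactly, continuity of $L$ and of the (unique) argmin maps — the latter being single-valued and continuous precisely because the Hessians in the proof of Lemma~\ref{prop:solvability} are positive definite under full row rank — gives $\bdlambda'' = \argmin_{\bdlambda\in\Yspace_1}L(\bdlambda,\bds';\bdm)$ and $\bds'' = \argmin_{\bds\in\Yspace_2}L(\bdlambda'',\bds;\bdm)$. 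Combined with $L(\bdu';\bdm)=L(\bdu'';\bdm)$ and strict monotonic decrease (Theorem~\ref{convergence_thm1}, which says $L$ strictly drops whenever $\bdu^{k+1}\neq\bdu^k$), one derives $\bdlambda'=\bdlambda''$ and then $\bds'=\bds''$, contradicting $\bdu'\neq\bdu''$. The delicate point is justifying continuity/single-valuedness of the argmin maps on the relevant compact set and handling the intermediate point $(\bdlambda^{k+1},\bds^k)$ carefully; this is the crux of the whole theorem.

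\textbf{Topology of the accumulation set and stationarity in the interior.} For the connected-compact claim I would combine asymptotic regularity with boundedness: the set of accumulation points of a bounded sequence with $\|\bdu^{k+1}-\bdu^k\|\to0$ is automatically connected and compact (a classical Ostrowski-type lemma), which I would cite from \cite{gorski2007} or \cite{bazaraa2006}. Finally, for an accumulation point $\bar\bdu$ in the interior of $\Yspace$: being a Nash equilibrium, it minimizes $L$ over each block slice, so the partial gradients $\nabla_\bdlambda L$ and $\nabla_\bds L$ both vanish at an interior point (no active inequality or boundary constraints there, and the single equality constraint is handled by noting the relevant directional derivatives along the feasible affine subspace vanish); hence $\bar\bdu$ is a stationary point of $L(\Cdot,\Cdot;\bdm)$. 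I would close by remarking that all these pieces together constitute the theorem, and that the full-row-rank hypothesis is used in exactly two places: to make the block minimizers unique (Lemma~\ref{prop:solvability}) and to invoke the uniqueness-up-to-the-constraint result Theorem~\ref{unique_theorem_constrained} implicitly through $L^\ast$ being attained at a genuine minimizer.
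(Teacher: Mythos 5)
Your proposal is correct and follows essentially the same route as the paper: boundedness of the iterates via the coercivity in \cref{Analemma4} together with the monotone decrease from \cref{convergence_thm1}, and then the alternate-convex-search convergence theory of Gorski et al.\ under unique solvability of the block subproblems (\cref{prop:solvability}). The only difference is one of granularity: the paper simply cites the proofs of Theorem~4.9 and Corollary~4.10 of \cite{gorski2007} for the Nash, asymptotic-regularity, connectedness and stationarity claims, whereas you unpack the closed-argmin-map, contradiction and Ostrowski-type arguments that those results rest on.
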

\begin{proof}
By \cref{convergence_thm1}, $\{L(\bdu^{k}; \bdm)\}_{k\in\mathbb{N}}$ is monotonically convergent. Combined with \cref{Analemma4}, it follows that the sequence $\{\bdu^{k}\}_{k\in\mathbb{N}}$ is bounded. Hence, The sequence has at least one accumulation point. 

Since $A_{i} = (a_{ijt})_{J\times T}$ is full row rank for each LOR $i$, by \cref{prop:solvability}, the problems in \cref{methodequ1} and \cref{methodequ2} are uniquely solvable. Then, using the proofs of theorem 4.9 and corollary 4.10 in \cite{gorski2007}, we obtain the desired results.
\end{proof}

\begin{remark}\label{rem:converg}
Under the assumptions of \cref{convergence_thm2}, if, furthermore, assume that the set of the Nash equilibrium points is composed of uniformly isolated points, then the sequence $\{\bdu^{k}\}_{k\in\mathbb{N}}$ admits a limit point in the set. Similar proof can be referred to \cite{xu2013}. 
\end{remark}

\section{Relationship to existing models}
\label{sec:comparison}

Here we establish the relationship between the proposed MLAAS and some existing algorithms, mainly including the maximum likelihood estimation for simultaneously reconstructing activity and attenuation (MLAA) \cite{rezaei2012}, and the maximum likelihood estimation for  simultaneously reconstructing activity and attenuation correction factors (MLACF) \cite{rezaei2014}. 

In MLAA, the expected measurement for $(i, t)$ is written as
\begin{equation}
\label{SAAmodel}
    \Bar{m}_{it} = \exp{\left(-\sum_{j=1}^J b_{ij}\mu_j\right)} \sum_{j=1}^{J}a_{ijt}\lambda_{j},\quad i=1,\cdots,I, ~t=1,\cdots, T, 
\end{equation}
where $\sum_{j=1}^J b_{ij}\mu_j$ is computed as the sinogram element of the attenuation image $\bmu = [\mu_1, \cdots, \mu_J]$ on LOR $i$, the $b_{ij}$ is the intersection length of LOR $i$ with pixel/voxel $j$ of $\bmu$, and $\exp{\left(-\sum_{j=1}^J b_{ij}\mu_j\right)}$ is acted as the associated attenuation correction factor. 

Since the data is Poisson distributed, the negative log-likelihood function can be formulated as 
\begin{align}
    \label{neg_log_lik_SAA}
   \nonumber L_{SAA}(\bdlambda,\bmu; \bdm) = \sum_{i,t}\Bigg{\{}&\exp{\left(-\sum_{j=1}^J b_{ij}\mu_j\right)} \sum_{j=1}^Ja_{ijt}\lambda_j \\
    &- m_{it}\ln\left(\exp{\left(-\sum_{j=1}^J b_{ij}\mu_j\right)}\sum_{j=1}^J a_{ijt}\lambda_j\right)\Bigg{\}}. 
\end{align}
Indeed, the aim of MLAA is to reconstruct the activity map $\bdlambda$ and the attenuation map $\bmu$ simultaneously from the data $\bdm$ by solving the following minimization problem: 
\begin{equation}\label{eq:SAA} 
\min_{(\bdlambda, \bmu) \in \mathbb{R}_+^J\times\mathbb{R}_+^J} L_{SAA}(\bdlambda, \bmu; \bdm). 
\end{equation}
The corresponding solving algorithm is presented by  
\begin{align*}
    &f_i^k = \exp{\left(-\sum_j b_{ij}\mu_j^k\right)} \quad \forall i, 
    \\
    &\lambda_j^{k+1} = \frac{\lambda_{j}^{k}}{\sum_{i,t}a_{ijt}f_{i}^{k}}\sum_{i,t}a_{ijt}\left(\frac{m_{it}}{\sum_{l}a_{ilt}\lambda_{l}^{k}}\right) \quad \forall j, \\
    &\psi_i^k = f_{i}^{k}\sum_{j,t}a_{ijt}\lambda_j^{k+1}\quad  \forall i, \\
    &\mu_j^{k+1} = \mu_j^k +\frac{\sum_i b_{ij}(\psi_i^k - \sum_t m_{it})}{\sum_i b_{ij}\psi_i^k\sum_l b_{il}} \quad \forall j. 
\end{align*}

In MLACF, the expected count for $(i, t)$ is written as  
\begin{equation}
\label{SAAmodel_2}
    \Bar{m}_{it} = f_i\sum_{j=1}^J a_{ijt}\lambda_j, \quad~i=1, \cdots, I,~t=1, \cdots, T, 
\end{equation}
where $f_i$ denotes the attenuation correction factor of LOR $i$. 

Likewise, as the data is Poisson distributed, the related negative log-likelihood can be formulated as 
\begin{equation}
    \label{neg_log_lik_SAACF}
    L_{SACF}(\bdlambda,\bdf; \bdm) = \sum_{i,t}\left\{f_i\sum_{j=1}^J a_{ijt}\lambda_j - m_{it}\ln\left(f_i\sum_{j=1}^J a_{ijt}\lambda_j\right)\right\}, 
\end{equation}
where $\bdf = [f_1, \ldots, f_I]^{\tra}$. 

The aim of MLACF is to estimate the activity map $\bdlambda$ and the attenuation correction factors $\bdf$ simultaneously from the data $\bdm$ by solving the following minimization problem: 
\begin{equation}\label{eq:SAACF} 
\min_{(\bdlambda, \bdf) \in \mathbb{R}_+^J\times (0,1]^{I}} L_{SACF}(\bdlambda, \bdf; \bdm). 
\end{equation}
The solving algorithm is given by  
\begin{align*}
    &\lambda_j^{k+1} = \frac{\lambda_j^k}{\sum_{i,t}a_{ijt}f_i^k}\sum_{i,t}a_{ijt}\left(\frac{m_{it}}{\sum_l a_{ilt}\lambda_l^{k}}\right) \quad \forall j\\
    &f_{i}^{k+1} = \frac{\sum_t m_{it}}{\sum_{j,t} a_{ijt}\lambda_j^{k+1}} \quad \forall i. 
\end{align*}

Note that the most significant difference between the proposed MLAAS and the existing methods of MLAA and MLACF is the mathematical form for  characterizing the attenuation correction factor as illustrated in \cref{mean_data}, \cref{SAAmodel}, and \cref{SAAmodel_2}. The proposed MLAAS takes advantage of the informed exponential form of the attenuation process as prior information compared to MLACF, which guarantees the rationality of mathematical modeling. Compared to MLAA, the proposed MLAAS directly uses the related attenuation sinogram rather than the X-ray transform of  the attenuation map, which is more accurate and easier to solve as indicated in the next section. Moreover, in model \cref{eq:SAACF} of MLACF, the feasible domain for the attenuation correction factor is not a closed set, which makes the problem more intractable to solve.

\section{Numerical experiments}
\label{sec:experiments} 

In this section, a specific example for the proposed model \cref{equ1}-\cref{eq_constraint} is considered by replacing the constraint \cref{eq_constraint}  and the upper bound in \cref{eq_constraint_1} with 
\begin{equation}\label{eq_constraint_spec}
\bdOne^{\tra}\bdlambda = N, 
\end{equation}
where the $\bdOne$ is a concrete selection of the $\bdOneMask$, resulting in the same constraint as in \cite{ren2024}. 

Based on \cref{equ2_general}-\cref{equ4_general}, the iterative scheme for solving the specific model is given by
\begin{align*}
	\lambda_{j}^{k+1} &= \frac{\lambda_{j}^{k}}{\exp(-s^k_i)\sum_{i,t}a_{ijt}}\sum_{i,t}\left\{a_{ijt}\left(\frac{m_{it}}{\sum_{l}a_{ilt}\lambda_{l}^{k}}\right)\right\} \quad\forall~j,\\
	\bdlambda^{k+1} &\longleftarrow \bdlambda^{k+1}\left(\frac{N}{\bdOne^{\tra}\bdlambda^{k+1}}\right),\\
	s_{i}^{k+1} &= \proj\left[\ln\left(\frac{\sum_{j,t}a_{ijt}\lambda_{j}^{k+1}}{\sum_{t}m_{it}}\right)\right]\quad\forall~ i. 
\end{align*}

To validate the performance of the proposed MLAAS, we conduct various numerical experiments using different phantoms such as XCAT phantom \cite{segars2010}, a synthetic phantom, and the digital reference object (DRO) \cite{pierce2015}. 

The compared algorithms include some existing methods such as MLAA, MLACF (see \cref{sec:comparison}). The following metrics of relative errors are used to evaluate the reconstruction quality quantitatively: 
\begin{equation*}
    \textrm{RE}_{\lambda}^{k} = \frac{\|\bdlambda^k-\bdlambda^*\|}{\|\bdlambda^*\|},\quad \textrm{RE}_{s}^{k} = \frac{\|\bds^k-\bds^*\|}{\|\bds^*\|} \quad\text{and}\quad \textrm{RE}_{m}^{k} = \frac{\|\bar{\bdm}^k - \bdm\|}{\|\bdm\|},
\end{equation*}
where $\bdlambda^k$ and $\bds^k$ denote the $k$-th iterated results of the activity and attenuation sinogram computed by the used method, $\bdlambda^*$ and $\bds^*$ the given truths, respectively, and $\bar{\bdm}^k$ is the vector with elements 
\[
\bar{m}_{it}^k = \exp(-s_{i}^{k})\sum_{j}a_{ijt}\lambda_{j}^{k}.
\] 
Note that $\textrm{RE}_{\lambda}^{k}$ and $\textrm{RE}_{s}^{k}$ stand for the relative errors of the $k$-th iterated results for the activity and attenuation sinogram, respectively, while $\textrm{RE}_{m}^{k}$ is the relative error for the data model. 

The numerical experiments are performed on a workstation running Python equipped with an Intel i9-13900K 3.0GHz CPU and Nvidia RTX A4000 GPU.

\subsection{Numerical convergence}\label{test1}

In this test, the XCAT phantom consisting of $256\times256$ image array with pixel size $0.117 \times 0.117\,\text{cm}^2$ originated from a 4D extended cardiac-torso  phantom (XCAT Version 2.0) \cite{segars2010}  is used to conduct the validation of numerical convergence for the proposed method MLAAS, as shown in the top of \cref{fig3_1_XCAT}.
\begin{figure}[htbp]
    \centering
    \includegraphics[width=0.9\linewidth,]{./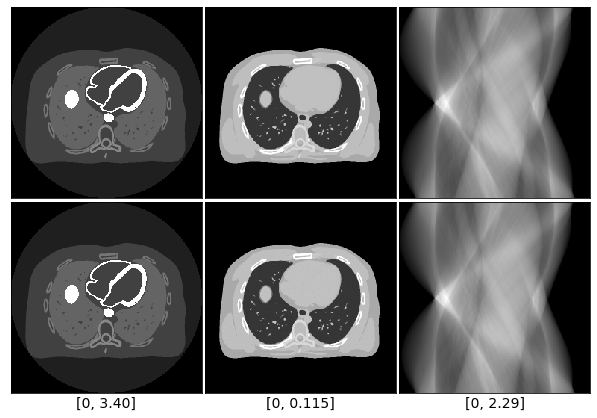}
    \caption{Top: The truths of the activity map (left), the attenuation map (middle) and the generated attenuation sinogram (right). Bottom: The reconstructed results of the activity map (left) and the attenuation sinogram (right) by using the proposed MLAAS, and accordingly the reconstructed attenuation image (middle) based on the obtained sinogram.}
    \label{fig3_1_XCAT}
\end{figure}

Field of view (FOV) is the inscribed circle of the square area, and parallel-beam scanning is performed. The scanning configuration for the phantom is including $256$ projection views evenly distributed over $[0,\pi)$, $256$ uniform detector bins with nearly $0.117\,\text{cm}$ bin-width at each view, and $10$ TOF bins with an effective TOF timing resolution of nearly $600\,\text{ps}$ for each LOR. The noise-free measured TOF sinogram data is $10\times256\times256$ in size. In addition, the  number of collected coincidence events is 1e6. 

Without loss of generality,  we initialize that $\bdlambda^0=1$ and $\bds^0=0$. In order to validate the numerical convergence of the proposed algorithm, the maximum iterative number is set to be a sufficiently large 5e4. After these iterations, the reconstructed results by the proposed method MLAAS are displayed in the bottom of \cref{fig3_1_XCAT}. Obviously, the reconstructed results of the activity and the attenuation sinogram are quite close to their truths, which demonstrates the good performance of the proposed method. It is worth noting that the reconstructed attenuation map is not a direct result of MLAAS, which is instead a by-product of inverting the estimated attenuation sinogram by some reconstruction method such as EM algorithm.  

Moreover, after sufficient iterations, we plot the curves of the metrics $\textrm{RE}_{\lambda}^{k}$, $\textrm{RE}_{s}^{k}$ and $\textrm{RE}_{m}^{k}$ as functions of iteration numbers in \cref{fig4}. It is shown that the proposed method has almost achieved numerical convergence and reconstructed the  accurate result, thereby demonstrating its effectiveness for inverting the noiseless data, which also confirms the convergence theory that established in \cref{convergence_analysis}. 
\begin{figure}[htbp]
    \centering
    \includegraphics[width=0.32\linewidth,]{./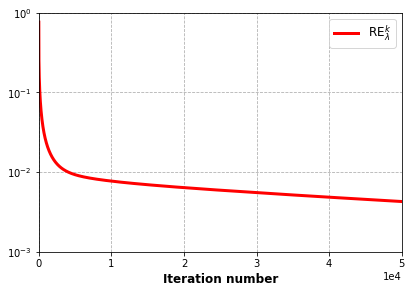}
    \includegraphics[width=0.32\linewidth,]{./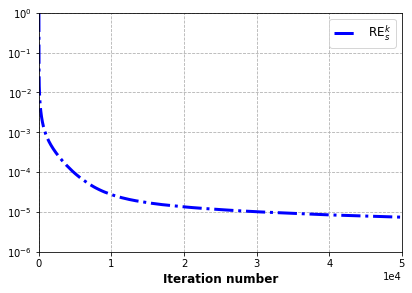}
    \includegraphics[width=0.32\linewidth,]{./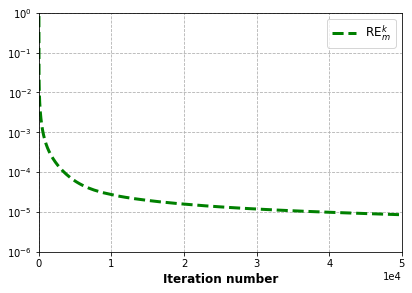}
    \caption{Metrics $\textrm{RE}_{\lambda}^{k}$, $\textrm{RE}_{s}^{k}$ and $\textrm{RE}_{m}^{k}$ are plotted in semi-log scale as functions of iteration numbers for simultaneously reconstructing the activity and the attenuation sinogram of the phantom given in \cref{fig3_1_XCAT} by the proposed algorithm MLAAS.}
    \label{fig4}
\end{figure}

\subsection{Numerical comparisons}
\label{test2}

To further evaluate the advantages of the proposed MLAAS, we make the comparison with several existing methods such as MLAA and MLACF (see \cref{sec:comparison}) by using three different tests.

\subsubsection{Test suite 1}

The used phantom is originated from the Digital Reference Object (DRO) of University of Washington \cite{pierce2015}, which is composed of $128\times128$ image array with the pixel size of $0.235\times0.235\,\text{cm}^2$ as shown in \cref{fig2_1}. The scanning configuration includes $128$ evenly parallel-beam projection views over $[0,\pi)$, $128$ uniform detector bins per view over $[-15,15]\,\text{cm}$, and $10$ TOF bins with an effective TOF timing resolution of nearly $600\,\text{ps}$ for each LOR. The measured coincidence events is set to be 1e6. Without loss of generality, we set the maximum iterative number to be 2e4 and the initialized $\bdlambda^0 = \boldsymbol{1}$ and $\bds^0 = \boldsymbol{0}$.
\begin{figure}[htbp]
    \centering
    \includegraphics[width=0.9\textwidth]{./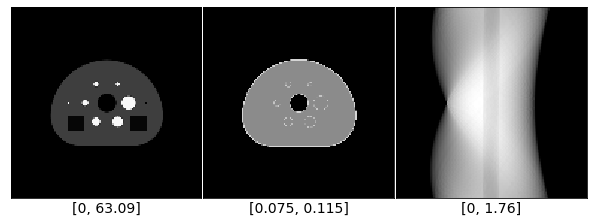}
    \caption{The truths of activity image (left), attenuation image (middle) and the corresponding attenuation sinogram (right) of the DRO phantom.}
    \label{fig2_1}
\end{figure}

In what follows, the activity and attenuation sinogram are reconstructed by utilizing MLAA, MLACF and the proposed MLAAS, respectively. Although the attenuation sinogram cannot be directly reconstructed by the first two methods, some additional process can be used to generate the corresponding attenuation sinogram. More precisely, Radon transform is performed to the obtained attenuation map in MLAA, and a negative logarithm operation is taken to the estimated attenuation correction factors in MLACF. The reconstructed results are displayed in \cref{fig2_2}.
\begin{figure}[htbp] 
   \centering 
    \begin{minipage}[htbp]{0.9\textwidth} 
   \centering 
   \subfloat[The activity images reconstructed by MLAA (left), MLACF (middle) and the proposed MLAAS (right) respectively.]{\includegraphics[width=\textwidth]{./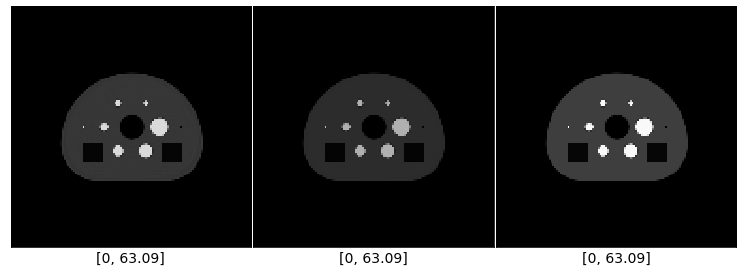}\label{fig2_2_1}} \\
   \subfloat[The attenuation sinograms reconstructed by MLAA (left), MLACF (middle) and the proposed MLAAS (right) respectively.]{\includegraphics[width=\textwidth]{./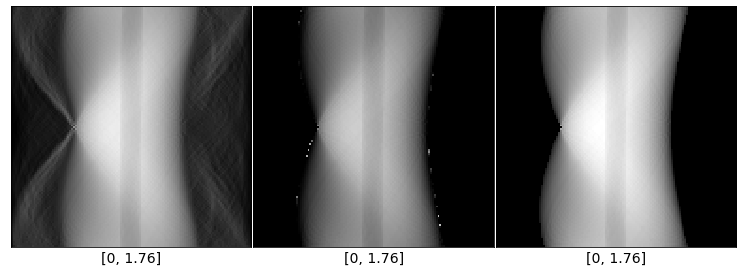}\label{fig2_2_2}} 
    \end{minipage} 
    \caption{The reconstructed results of activity image and attenuation sinogram after 2e4 iterations based on the DRO phantom.} 
    \label{fig2_2} 
\end{figure}

As shown in \cref{fig2_2_1}, the activity image reconstructed by MLAA suffers from moderate cross-talk and severe artifacts in the periphery of the scanned object, and that by MLACF sustains slight artifacts in the boundary of the imaged object. Instead, that by the proposed MLAAS is quite close to the truth. As for the estimated attenuation sinogram, it is clearly visible from \cref{fig2_2_2} that our algorithm can also generate better result than the other two ones. In addition, to compare the numerical performance, the curves of the relative errors $\textrm{RE}_{\lambda}^{k}$ and $\textrm{RE}_{m}^{k}$ as functions of iteration numbers are depicted in \cref{fig2_3}. As we can see, the proposed MLAAS has shown faster convergence rate than MLAA and MLACF.  Based on these comparisons, it is easy to observe that the proposed MLAAS has better performance of efficiency and accuracy than these two existing algorithms to handle the simultaneous estimation and further improves the final image quality.
\begin{figure}[H]
    \centering
    \includegraphics[width=0.49\linewidth,]{./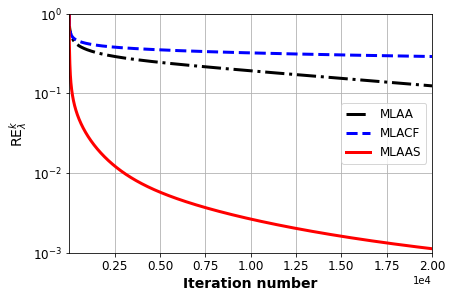}
    \includegraphics[width=0.49\linewidth,]{./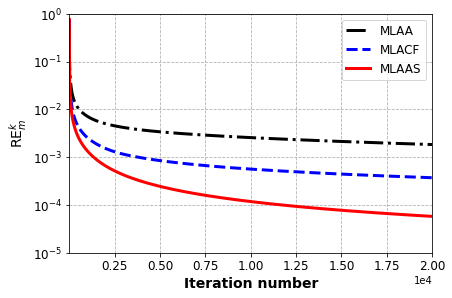}
    \caption{Metrics $\textrm{RE}_{\lambda}^{k}$ and $\textrm{RE}_{m}^{k}$ are plotted in a semi-log scale as functions of iteration numbers based on the DRO phantom. Note that the MLAAS is the proposed algorithm, and the MLAA and MLACF are two existing ones. }
    \label{fig2_3}
\end{figure}

\subsubsection{Test suite 2}
\label{test3}

To further illustrate, we design a synthetic phantom to validate the performance of the proposed MLAAS, where the phantom is discretized into   $64\!\times\!64$ pixels in a domain of $30\!\times\!30\,\text{cm}^2$, as shown in \cref{fig3_1}. 
\begin{figure}[htbp]
    \centering
    \includegraphics[width=0.9\linewidth,]{./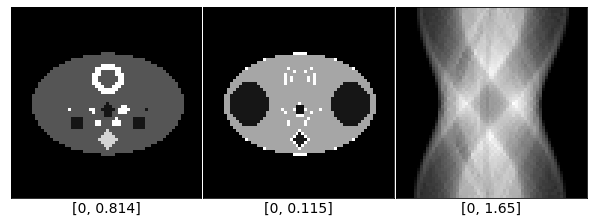}
    \caption{The truths of activity map (left), attenuation map (middle) and the corresponding attenuation sinogram (right) from the synthetic  phantom.}
    \label{fig3_1}
\end{figure}
The measured TOF-PET emission data are composed of $64$ parallel-beam projection views uniformly distributed over $[0,\pi)$, $64$ equidistant bins with about $0.469\,\text{cm}$ bin-width for each view and $10$ TOF bins for each LOR. The number of collected coincidence events is set to be 1e4.  Without loss of generality, we set the maximum iterative number to be 1e4, and initialize $\bdlambda^0 = \boldsymbol{1}$ and $\bds^0 = \boldsymbol{0}$.

The activity and attenuation sinogram are reconstructed by the proposed MLAAS, and the compared MLAA and MLACF respectively, which are shown in \cref{fig3_2}. 
\begin{figure}[htbp] 
   \centering 
    \begin{minipage}[htbp]{\textwidth} 
   \centering 
   \subfloat[The activity maps reconstructed by MLAA (left), MLACF (middle) and the proposed MLAAS (right) respectively.]{\label{fig3_2_1}\includegraphics[width=\textwidth]{./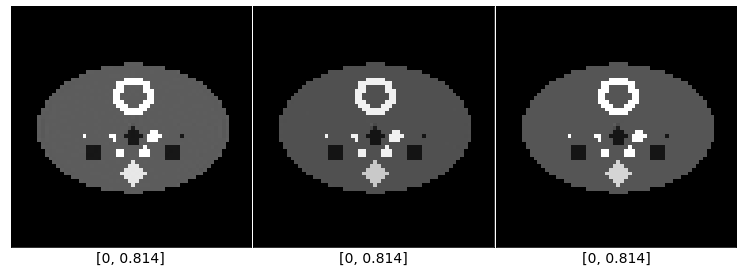}} \\
    \subfloat[The attenuation sinograms reconstructed by MLAA (middle), MLACF (middle) and the proposed MLAAS (right) respectively.]{\label{fig3_2_2}\includegraphics[width=\textwidth]{./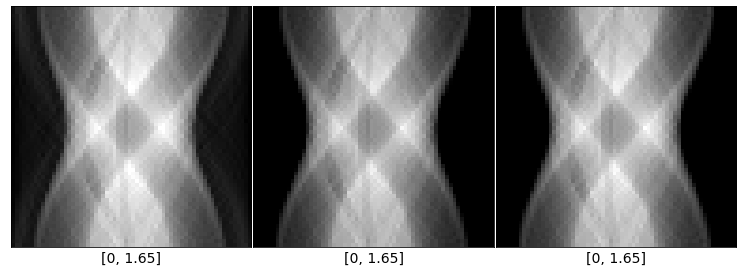}} 
    \end{minipage} 
    \caption{The reconstructed results of activity and attenuation sinogram after 1e4 iterations based on the synthetic phantom in \cref{fig3_1}.} 
    \label{fig3_2} 
\end{figure}
From a visual perspective, the images reconstructed by these three methods seem to be analogous. However, to assess the convergent behavior for these methods, we depict the curves of $\textrm{RE}_{\lambda}^{k}$ and $\textrm{RE}_{m}^{k}$ in \cref{fig3_3}. It is easy to figure out that the proposed MLAAS has shown faster convergence rate than MLAA and MLACF.
\begin{figure}[htbp]
    \centering
    \includegraphics[width=0.49\textwidth]{./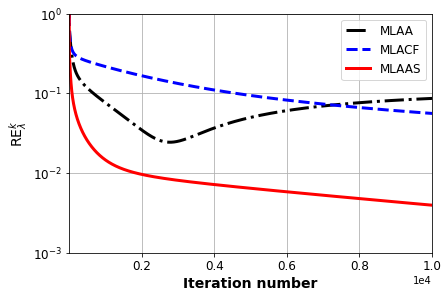}
    \includegraphics[width=0.49\textwidth]{./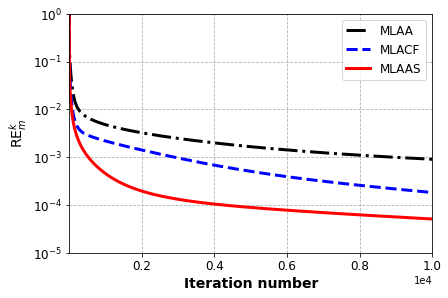}
    \caption{Metrics $\textrm{RE}_{\lambda}^{k}$ and $\textrm{RE}_{m}^{k}$ are plotted in a semi-log scale as functions of iteration numbers based on the synthetic phantom. Note that MLAAS is the proposed algorithm, and MLAA and MLACF are the two existing ones.}
    \label{fig3_3}
\end{figure}

Apart from the visual perception, the structural similarity (SSIM) and peak signal-to-noise (PSNR) are often used to evaluate the image quality quantitatively \cite{zhou2004}. So we compute the values of 1$-$SSIM and PSNR for these reconstructed results by different methods, as displayed in \cref{table3}. Note that the proposed method can achieve the best reconstruction quality among these three methods. 

\begin{table}[htbp]
\centering
\caption{The 1$-$SSIM and PSNR of the reconstructed activity and attenuation sinogram against the corresponding truths in the test suite 2. The $\downarrow$ (resp. $\uparrow$) indicates that a smaller (larger) value is better. Bold font means the best result. }
\label{table3}
\begin{tabular}{c *{2}{c} *{2}{c}} 
\toprule
&\multicolumn{2}{c}{Activity} & \multicolumn{2}{c}{Attenuation sinogram} \\
\cmidrule(lr){2-3} \cmidrule(lr){4-5} 
&1$-$SSIM\,$\downarrow$ & PSNR\,$\uparrow$ & 1$-$SSIM\,$\downarrow$ & PSNR\,$\uparrow$  \\
\cmidrule(lr){1-5}
MLAA & 3.91e-3 &33.70 &2.09e-1&23.14 \\
MLACF&1.74e-3&37.47 &7.37e-3&30.89\\
MLAAS (Ours) &\textbf{1.38e-5}&\textbf{60.50} &\textbf{1.18e-5}&\textbf{70.42}\\
\bottomrule[0.25mm]
\end{tabular}
\end{table}

\subsubsection{Test suite 3}
In practice, the measured data is frequently corrupted by noise. To simulate a more practical situation, we conduct the experiments by using the same phantom as in test suite 1 in  different noise levels. Note that the noise data is generated by a Poisson generator while the other settings remain unchanged unless otherwise stated. 
The datasets with low, moderate and high noise levels are generated, where the corresponding signal-to-noise ratios (SNRs) are about 27.23 dB, 17.21 dB and 7.25 dB, respectively. 

Since the noisy data is dealt with, the early-stopping technique is needed. Generally, the iteration in the experiment with higher noise level should be stopped earlier. For the low noisy case, the maximum iteration number is set to 1000. The activity images reconstructed by different methods are shown in \cref{fig5_1}.
\begin{figure}[htbp]
    \centering
    \includegraphics[width=0.9\linewidth,]{./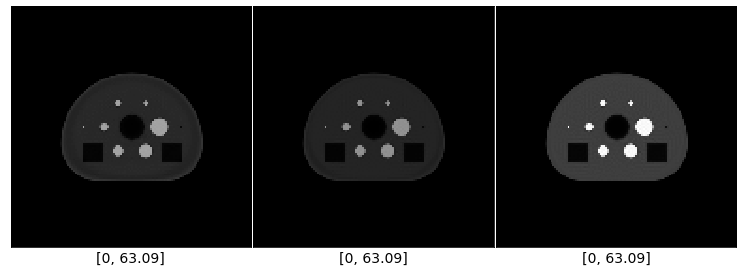}
    \caption{The activity images reconstructed by MLAA (left), MLACF (middle) and the proposed MLAAS (right) respectively from the 27.23 dB noisy data after 1000 iterations.}
    \label{fig5_1}
\end{figure}
Note that the activity image reconstructed by the proposed MLAAS is the closest to the truth. For quantitative comparison, we calculate the values of normalized root mean square error (NRMSE), SSIM and PSNR of the reconstructed activity map against the truth, which are tabulated in \cref{table7}. It shows that the proposed method can achieve the best performance among these three approaches.
\begin{table}[htbp]
	\centering
	\caption{The NRMSE, SSIM and PSNR of the reconstructed activity against the corresponding truth from the data in 27.23 dB noise level. The $\downarrow$ (resp. $\uparrow$) indicates that a smaller (larger) value is better. Bold font means the best result.}
	\label{table7}
	\begin{tabular}{cccc}
		\toprule[0.25mm]
		&NRMSE\,$\downarrow$& SSIM\,$\uparrow$& PSNR\,$\uparrow$ \\
		\midrule[0.1mm]
		MLAA &3.53e-1&0.9550&26.08 \\
		MLACF &4.22e-1&0.9378&24.53\\
		MLAAS (Ours) &\textbf{3.09e-2}&\textbf{0.9961}&\textbf{47.26}\\
		\bottomrule[0.25mm]
	\end{tabular}
\end{table} 

For the moderate noisy case, the maximum iteration number is set to 700. The computed results are shown in \cref{fig5_2}, and the quantitative indexes of NRMSE, SSIM and PSNR are tabulated \cref{table8}, which manifests that the value of NRMSE by MLAAS is smaller than those by MLAA and MLACF, and the values of SSIM and PSNR are larger than those by the other two methods. So, in this case, our method is also the best.
\begin{figure}[htbp]
    \centering
    \includegraphics[width=0.9\linewidth,]{./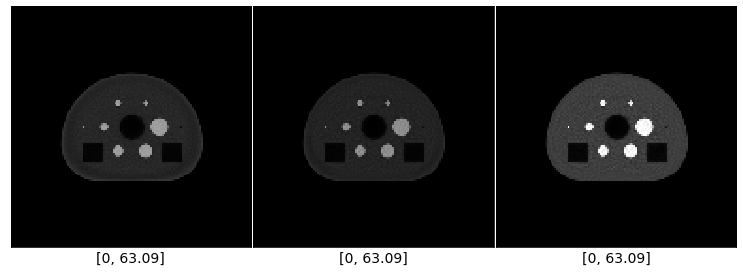}
    \caption{The activity maps reconstructed by MLAA (left), MLACF (middle) and the proposed MLAAS (right) respectively from the 17.21 dB noisy data after 700 iterations.}
    \label{fig5_2}
\end{figure}
\begin{table}
	\centering
	\caption{The NRMSE, SSIM and PSNR of the reconstructed activity against the corresponding truth from the data in 17.21 dB noise level. The $\downarrow$ (resp. $\uparrow$) indicates that a smaller (larger) value is better. Bold font means the best result.}
	\label{table8}
	\begin{tabular}{cccc}
		\toprule[0.25mm]
		&NRMSE\,$\downarrow$& SSIM\,$\uparrow$& PSNR\,$\uparrow$ \\
		\midrule[0.1mm]
		MLAA &3.77e-1&0.9434&25.52 \\
		
		MLACF &4.39e-1&0.9255&24.20\\
		
		MLAAS (Ours) &\textbf{7.12e-2}&\textbf{0.9746}&\textbf{40.00}\\
		\bottomrule[0.25mm]
	\end{tabular}
\end{table} 

Finally, for the high noisy case, the maximum iteration number is set to 51. The activity maps reconstructed by different methods are displayed in \cref{fig5_3}. Meanwhile, the quantitative indexes of NRMSE, SSIM and PSNR of the reconstructed activity against the corresponding truth are tabulated in \cref{table9}. Although the data noise is relatively high, the proposed MLAAS still outperforms MLAA and MLACF soundly. 
\begin{figure}[htbp]
    \centering
    \includegraphics[width=0.9\linewidth,]{./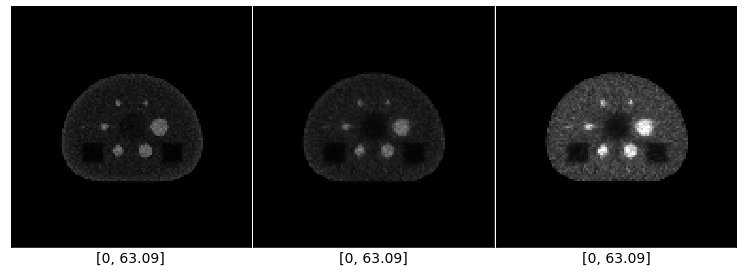}
    \caption{The activity images reconstructed by MLAA (left), MLACF (middle) and the proposed MLAAS (right) respectively from the 7.25 dB noisy data after 51 iterations.}
    \label{fig5_3}
\end{figure}

\begin{table}
	\centering
	\caption{The NRMSE, SSIM and PSNR of the reconstructed activity against the corresponding truth from the data in 7.25 dB noise level. The $\downarrow$ (resp. $\uparrow$) indicates that a smaller (larger) value is better. Bold font means the best result.}
	\label{table9}
	\begin{tabular}{cccc}
		\toprule[0.25mm]
		&NRMSE\,$\downarrow$& SSIM\,$\uparrow$& PSNR\,$\uparrow$ \\
		\midrule[0.1mm]
		MLAA &0.6041&0.8596&21.42 \\
		
		MLACF &0.5974&0.8538&21.52\\
		
		MLAAS (Ours) &\textbf{0.2526}&\textbf{0.8993}&\textbf{28.99}\\
		\bottomrule[0.25mm]
	\end{tabular}
\end{table} 

From these tests, even if in different noisy situations, it demonstrates that the proposed method remains robust to some extent, which makes it possible to deal with the data in practical applications.

\section{Discussion}
\label{discussion}

Over the last two decades, with the advent of scanners with TOF-capability, simultaneous reconstruction of activity and attenuation maps in TOF-PET has been extensively studied. That is because it has been shown that the use of TOF information can not only obtain more accurate activity than before but also alleviate the cross-talk problem. Moreover, in continuous case, previous works have demonstrated that there is a unique solution up to a constant for the joint estimation of activity and attenuation sinogram only from the TOF-PET emission data. In contrast, the current work developed a new ML approach to jointly reconstruct the activity and attenuation sinogram by additionally imposing a given total amount of the activity in some mask region of the FOV for the discrete case. Then, we also analyzed the existence, uniqueness and stability of the solution to the proposed model. In fact, estimating that amount in advance is a critical problem, and an alternative is to give an upper bound through the collected data, which is reserved for future research. 

In contrast to the standard MLAA algorithm in \cite{nuyts1999} that used to simultaneously recovery of activity map and attenuation map in PET, our method does not need to estimate the attenuation map but the attenuation sinogram. Note that the MLAA considers the situation where the effective region containing the injected radioactive tracer is the same as that suffering from attenuation. However, in practice, it does not always happen. Occasionally, some organ-specific tracer is used in the clinical application, which leads to serious inconsistence between the two mentioned regions. Although one attempt is to treat the attenuation reconstruction as a CT interior problem, it is still challenging. On the contrary, it can be overcome by the proposed MLAAS algorithm for which only needs to estimate the value of attenuation sinogram along each LOR. 


Compared to the MLACF algorithm that is employed to handle the simultaneous reconstruction of activity and attenuation correction factors, the proposed method can make use of the exponential property of the attenuation. On the one hand, it makes the algorithmic behaviours more stable and faster. On the other hand, it also avoids optimizing the associated problem over a semi open closed interval.

\section{Conclusion}
\label{conclusion}

In this article, we have devoted our efforts to study the approach with theoretical guarantees of simultaneously reconstructing the activity and attenuation sinogram for TOF-PET, which can achieve autonomous attenuation correction without the need for additional CT or MRI scan. 

Based on the maximum likelihood estimation, we proposed a new mathematical model for simultaneously reconstructing the activity and attenuation sinogram from the TOF-PET emission data only. In the proposed model, we made use of the exclusively exponential form for the attenuation correction factors, and also utilized the constraint of a total amount of the activity in some mask region. Under mild conditions, we proved that the proposed model is well-posed, the solution of which exists uniquely, and is also stable under a small perturbation of measurements. 

To solve the proposed model, we proposed an efficient algorithm to update the activity and attenuation sinogram alternatively. Under the condition that the two subproblems are exactly solved at each iteration, we obtained some useful convergent results for the algorithm. 

The performance of the proposed algorithm was assessed by a number of numerical experiments with noiseless/noisy emission data. It turns out that the proposed method, namely, MLAAS, is of numerical convergence and robust to noise, and has the capability of autonomous attenuation correction. Through numerical comparisons, it can be seen that the proposed MLAAS outperforms some state-of-the-art methods such as MLAA and MLACF in terms of accuracy and efficiency.

\appendix
\section{Appendix} 

\subsection{The derivation of the iterative scheme \cref{equ2_general}-\cref{equ4_general}}
\label{sec:appendixA}
To proceed, we rewrite \cref{neg_log_likelihood} of the objective function in \cref{equ1} as 
\begin{equation}
    \label{Appequ1}
    L(\bdlambda, \bds; \bdm) = \sum_{i,t} \left\{\exp{(-s_i)}\sum_{j=1}^{J}a_{ijt}\lambda_{j} - m_{it}\ln\left(\sum_{j=1}^{J}a_{ijt}\lambda_{j}\right) + m_{it}s_i \right\}. 
\end{equation}
Using the fact that $-\ln(\cdot)$ is a convex function, and the Jensen's inequality, for \cref{Appequ1}, we have  
\begin{align}
    \label{Appequ3}
       \nonumber L(\bdlambda, \bds; \bdm)  &\le \sum_{i,t}\Bigg\{\exp{(-s_i)}\sum_j a_{ijt}\lambda_{j} \\
       \nonumber &\hspace{5mm} - m_{it}\sum_j\frac{a_{ij t}\lambda_{j}^{k}}{\sum_{l}a_{ilt}\lambda_{l}^{k}}\ln\left(\frac{\sum_{l}a_{ilt}\lambda_{l}^{k}}{\lambda_{j}^{k}}\lambda_{j}\Bigg) +m_{it}{s_{i}}\right\}\\
        &:= Q(\bdlambda; \bds, \bdlambda^{k}).
\end{align}
From a perspective of non-statistical framework, the expectation step of EM algorithm is equivalent to replacing the original problem with another readily solvable problem, which is essentially a process of the optimization transfer \cite{de1995,erdogan1999,fessler1998,jacobson2003}. 

Fixed $\bds = \bds^{k}$ in \cref{Appequ3}, it is easy to verify that 
\begin{align*}
    Q(\bdlambda^k; \bds^k, \bdlambda^{k}) &= L(\bdlambda^k, \bds^k; \bdm),\\
    \nabla Q(\bdlambda; \bds^k, \bdlambda^{k})\big|_{\bdlambda = \bdlambda^{k}} &= \nabla L(\bdlambda, \bds^k; \bdm)\big|_{\bdlambda = \bdlambda^{k}},
\end{align*}
and  
\begin{equation}\label{Appequ4}
    Q(\bdlambda; \bds^k, \bdlambda^{k}) - Q(\bdlambda^k; \bds^k, \bdlambda^{k}) \ge L(\bdlambda, \bds^k; \bdm) - L(\bdlambda^k, \bds^k; \bdm),
\end{equation}
where $\nabla$ denotes the gradient operator. It is worth noting that the minimization of $Q(\bdlambda; \bds^{k}, \bdlambda^{k})$ is simple and uniquely solvable. Hence, $Q(\bdlambda; \bds^{k}, \bdlambda^{k})$ can be served as a surrogate function of $L(\bdlambda, \bds^k; \bdm)$. Then, if the  constraint is not considered, solving the subproblem of \cref{methodequ1} can be transferred into 
\begin{equation}
   \label{Appequ6}
   \boldsymbol{\lambda}^{k+1} = \argmin_{\bdlambda}Q(\bdlambda; \bds^{k}, \bdlambda^{k}).
\end{equation}
Taking the partial derivative for $Q(\bdlambda; \bds^{k}, \bdlambda^{k})$ with respect to $\lambda_{j}$, we immediately obtain that 
\begin{equation*}
    \frac{\partial Q(\bdlambda; \bds^{k}, \bdlambda^{k})}{\partial\lambda_{j}} = \sum_{i,t}\left\{\exp{(-s_{i}^{k})}a_{ijt} - m_{it}\frac{a_{ij t}\lambda_{j}^{k}}{\lambda_{j}\sum_{l}a_{ilt}\lambda_{l}^{k}}\right\}\quad\forall j.
\end{equation*}
Letting the above partial derivative be zero, we get the update of \cref{equ2_general}. 
Besides that, we also can prove that the solution is unique thanks to the fact that
\begin{equation*}
    \frac{\partial^{2} Q(\bdlambda; \bds^{k}, \bdlambda^{k})}{\partial\lambda_{j}\partial\lambda_{l}} = 
    \left\{
    \begin{aligned}
        &\sum_{i,t}\left\{m_{it}\frac{a_{ij t}\lambda_{j}^{k}}{(\lambda_{j})^{2}\sum_{l}a_{ilt}\lambda_{l}^{k}}\right\} > 0, \quad j = l,\\
        & 0 ,\quad j \neq l 
    \end{aligned}\right.
    \end{equation*}
for $\bdlambda \neq 0$. As to the constraint of the subproblem \cref{methodequ1}, we just need to perform the update \cref{equ3_general}.

Furthermore, using \cref{Appequ4} and \cref{Appequ6}, it follows that 
\begin{equation*}
    L(\bdlambda^{k+1}, \bds^k; \bdm) - L(\bdlambda^k, \bds^k; \bdm) \le Q(\bdlambda^{k+1}; \bds^k, \bdlambda^{k}) - Q(\bdlambda^k; \bds^k, \bdlambda^{k})\le0,
\end{equation*}
which guarantees that the value of the likelihood function gradually decreases with iterating
\begin{equation}
\label{Appfix_s}
    L(\bdlambda^{k+1}, \bds^k; \bdm) \le L(\bdlambda^k, \bds^k; \bdm).
\end{equation}

Similarly, letting $\bdlambda = \bdlambda^{k+1}$, we need to consider the subproblem of \cref{methodequ2}.  
Since the objective function in \cref{methodequ2} is separable for the attenuation sinogram on each LOR, the original minimization problem can be translated into solving the following problems independently 
\begin{equation}
    \label{Appequ10}
    s_{i}^{k+1} = \argmin_{s_{i} \ge 0}\sum_{t} \left\{\exp{(-s_i)}\sum_{j=1}^{J}a_{ijt}\lambda_{j}^{k+1} + m_{it}s_i \right\}\quad\forall~i. 
\end{equation}
Consequently, we can solve \cref{Appequ10} directly and then obtain the uniquely closed-form solution by \cref{equ4_general}.

\bibliographystyle{plain}
\bibliography{SAASreferences}

\end{document}